\newtheorem{theorem}{Theorem}
\newtheorem{lemma}{Lemma}
\newtheorem{example}{Example}
\begin{document}
\title{Concurrent Regenerating Codes and Scalable Application in Network Storage
}

\author{\IEEEauthorblockN{Huayu Zhang\IEEEauthorrefmark{1},
Hui Li\IEEEauthorrefmark{1},
Hanxu Hou \IEEEauthorrefmark{1}\IEEEauthorrefmark{2},
K. W. Shum\IEEEauthorrefmark{2} and ShuoYen Robert Li\IEEEauthorrefmark{2}}
\IEEEauthorblockA{\IEEEauthorrefmark{1}Shenzhen Graduate School, Peking University, China\\
Email: {zhanghuayu,lih64}@pkusz.edu.cn}
\IEEEauthorblockA{\IEEEauthorrefmark{2}Institute of Network Coding, the Chinese University of Hong Kong, Shatin, Hong Kong}}
%
\maketitle
\IEEEpeerreviewmaketitle
\begin{abstract}
To recover simultaneous multiple failures in erasure coded storage systems, Patrick Lee \textit{et al} introduce concurrent repair based minimal storage regenerating codes to reduce repair traffic. The architecture of this approach is simpler and more practical than that of the cooperative mechanism in non-fully distributed environment, hence this paper unifies such class of regenerating codes as concurrent regenerating codes and further studies its characteristics by analyzing cut-based information flow graph in the multiple-node recovery model. We present a general storage-bandwidth tradeoff and give closed-form expressions for the points on the curve, including concurrent repair mechanism based on minimal bandwidth regenerating codes. We show that the general concurrent regenerating codes can be constructed by reforming the existing single-node regenerating codes or multiple-node cooperative regenerating codes. Moreover, a connection to strong-MDS is also analyzed. 

On the other respect, the application of RGC is hardly limited to "repairing". It is of great significance for "scaling", a scenario where we need to increase(decrease) nodes to upgrade(degrade) redundancy and reliability. Thus, by clarifying the similarities and differences, we integrate them into a unified model to adjust to the dynamic storage network.

\end{abstract}

\begin{IEEEkeywords}
distributed storage system, regenerating codes, cooperative, concurrent, cut-set flow graph, multiple failures
\end{IEEEkeywords}

\section{Introduction} \label{1}
With the scale of distributed storage system (DSS) growing, component (e.g., disk, server or rack) failures due to various reasons become normal events, driving redundant strategy to provide fault-tolerance reliability. Replication is a simple way used in most systems, such as GFS \cite{ghemawat2003google}, HDFS \cite{borthakur2008hdfs} and S3 \cite{garfinkel2007evaluation}, where triple copies are dispersed across different nodes. In general, $m+1$ copies are needed to tolerant $m$ failures, which is a high cost of storage. 
Due to the lower storage cost but higher reliability than simple replication, erasure codes become popular in new generation DSS, such as Cleversafe \cite{cleversafe2008paradigm}, GFS2 \cite{whitehouse2007gfs2} and HDFS-RAID \cite{borthakur2010hdfs}.
However, it is a disadvantage of erasure codes to repair failure nodes in a precious bandwidth environment. For instance, we divide an original file, with a size of $B$, into $k$ pieces, encode them into $n$ pieces and distribute the coded pieces to $n$ storage nodes, each of which stores $\frac{B}{k}$ data. Although the \textit{(n,k)} maximum distance separable (MDS) property guarantees that any $k$ out of $n$ nodes can reconstruct the original file, we must download the whole file to recover mere one piece in a replacement node (newcomer) to maintain the system in the same state when there is one node failed.

To reduce the repair traffic, network coding \cite{li2003linear} is applied to erasure codes in the premise of keeping MDS property. Reconsider the above example, the $\frac{B}{k}$ data in each piece are further divided into $d-k+1$ strips, a linear combination of these strips is downloaded from $d$ active nodes (helpers) to recover the lost piece. To some extent, the total repair bandwidth can be reduced to $\frac{Bd}{k(d-k+1)}$ for $d>k$.
Using network coding, Dimakis \textit{et al} \cite{dimakis2007benefits} \cite{dimakis2010network} cast the storage problem as a network multicast problem and clarified the tradeoff between storage and bandwidth. They proposed regenerating codes (RGC) based on the points of the tradeoff curve. The famous codes are the minimal storage regenerating (MSR) codes and minimal bandwidth regenerating (MBR) codes that based on the two extreme points on the curve. Subsequently, varieties of construction of the codes are proposed \cite{RashmiShah-75} \cite{SuhRamchandran-78} \cite{wu2010existence} \cite{5773063}. A survey launched by Dimakis \cite{dimakis2011survey} summarizes the rapid development of RGC. Nevertheless, the researches mainly center on single failure. 

In some situation, there might be multiple failures, say $t>1$ failures. To repair multiple failures, the repair traffic is still rather heavy in the way of conventional process. Thus, it is quite natural to turn to RGC for help. The simplest method is to repair them one by one by using RGC, which can be independently optimal in each recovery. Yet the cumulative traffic is not minimal. Wang \textit{et al.} \cite{wang2010mfr} improved this method by allowing a newcomer to connect to both of the helpers and repaired nodes to complete the whole recovery. It is a multi-loss flexible recovery (MFR) mechanism focusing on minimal storage point.
Nevertheless, it is suboptimal in contrast with cooperative RGC \cite{5402494} \cite{5962548}, which can minimize this traffic by further mutually exchanging information among $t$ repaired nodes after each node downloads data from original helpers. Both the original traffic from helpers to newcomers and the coordinate traffic among repaired nodes are taken into consideration. Kermarrec \textit{et al.} \cite{5978920} went beyond these work and proposed adaptive regenerating codes. Shum and Hu \cite{6565355} exhaustively presented the storage-bandwidth tradeoff, on the curve of which are the two extreme points that respectively correspond to minimal-storage cooperative regenerating (MSCR) codes and minimal bandwidth cooperative regenerating (MBCR) codes. Explicit constructions of MBCR are proposed in \cite{6566803} and that of MSCR with different parameters in \cite{6620465} \cite{6847953}. 

However, the disadvantages of cooperative RGC mainly are:
\begin{enumerate}
\item[1] In each regeneration period, every newcomer needs to communicate with $d$ helpers in the first phase and $t-1$ other newcomers in the second phase. We treat the information channel between two nodes as link. Logically, the system need to maintain totally about $td+t(t-1)/2$ links. 
\item[2] The data sent out to other newcomers at the second phase cannot be generated until the newcomer receives all the data from $d$ helpers. Meanwhile, the final recovered data cannot be generated until the newcomer receives all the data from other $t-1$ newcomers. These processes need high synchronization and consistency.
\item[3] For this mechanism is highly coupled, one minor error in a phase would lead to the failure of the whole repair. 
\end{enumerate} 

The application of cooperative RGC requires a fully decentralized network. The way of "teamwork" for repair highly depends on the cooperation of other members, then the codes have limitations (e.g., robustness, consistency and management problems) in some environments, such as the centralized-management datacenters. Therefore, we need a more appropriate method for repairing multiple failures in non-fully distributed network.

In fact, Li \textit{et al.} \cite{DBLP:journals/corr/abs-1302-3344} \cite{6880379} developed a distributed storage system, the CORE, based on MSR codes. It introduces a concurrent repair framework for both single and multiple failures.  The repair process in CORE is that the engine collects data from $d$ helpers, regenerates all the lost data and then disperse them into newcomers. Similar to conventional erasure coded system, CORE only considers the traffic that help to recover the whole lost data and ignore the retransmitted traffic of recovered data to distributed nodes. The framework reserves the advantages of erasure codes so that it can be easily accepted and deployed in the mainstream systems, such as HDFS-RAID \cite{you2014repairing}.

Nevertheless, the authors only focus on the minimal storage point and prefer the system design to theoretical analysis. In this paper, we further study this mechanism, analyze its cut-based information flow graph and give the closed-form expressions for the points on the storage-bandwidth tradeoff curve.
Moreover, CORE used \textit{virtual nodes} to prove the possibility of multiple repair and showed the bad failure pattern that can not be repaired. We disproves the existence of bad failure pattern.
Meanwhile, we build a connection to strong-MDS properties proposed in \cite{5402494} when $t=k$.

On the other respect, DSS provides both huge storage volume for magnanimity information and parallel services for billions of intensive accesses \cite{fan2009diskreduce}. Apart from ensuring no information lost, the system needs to adjust the redundancy of data according to the change of workload. For example, when new software is released, software company might distribute as many copies as possible to cope with the outburst downloading, regardless of using cache or disk. When the hot degree of the information goes down, it merely keep few ones for the normal downloading. Thus, flexibility is of the same importance as reliability for content delivery network (CDN) and information centric network (ICN) \cite{6563278}. Based on the same perspective that replication means more resource and energy consuming, we study the role of erasure codes (EC) in scalability. For instance, we upgrade $(n,k)$ codes into $(n+3,k)$ codes to store more data pieces. Thus, similar problems crop up because of large upgrade bandwidth if the original file is invalid. Since the flexibility of multicast in turn implies the scalability of storage, we extract potential value of RGC and initially represent its fetching characteristics in scalability. The second result in this paper is that we extend the functional RGC for upgrade and clarify the difference between repair and upgrade.


The remainders of this paper are organized as follows: in section \ref{2} we present preliminary background and related work about failure repair in erasure codes DSS. In section \ref{3} we analyze cut-based information flow graph in a concurrent model and give closed-form expressions for the points on the tradeoff curve. We also present how to construct concurrent regenerating codes by using existing approaches and prove the pervasive strong-MDS property. The second contribution of this paper is presented in section \ref{4}. We give our conclusion in section \ref{5}. We summer the notation in table \ref{notations}.
\begin{table}[ht] 
\caption{Summary of key notations} 
\small{
\begin{tabular}{ll}
\hline
\textbf{Notation} & \textbf{Meaning}\\
\hline
$B$ & The size of the source file.\\
$n$ & The total number of storage nodes. \\
$k$ & The minimal number of nodes for reconstruction.\\
$d$ & The total number of helpers.\\
$\mathbf{h}$ & The $d$-length capacity vector for upgrade.\\
$k^*$ & The flexible $k$ for strong-MDS property.\\
$\mathbf{c}^*$ & The $k^*$-length capacity vector for reconstruction.\\
$t$ & The number of nodes repair concurrently.\\
$s$ & The number of extended nodes for upgrading \\
    & $(n,k)$ to $(n+s,k)$ MDS codes.\\
$\alpha$ & Storage per node.\\
$\beta$ & Repair/upgrade bandwidth from each helper.\\
$\gamma$ & The total repair/upgrade bandwidth.\\
$\lambda$ & The number of links built in a repair/upgrade \\
    & scenario.\\
\hline
\end{tabular}}
\end{table} \label{notations}
\section{Preliminary Background And Related Work} \label{2}

\subsection{Regenerating Code for Single Failure} \label{2a}
\begin{table*} \centering
\renewcommand\arraystretch{1.5}
\caption{Cut-set bound of information flow graph in erasure coded system}\label{cutsetresult}
\begin{tabular}{|c|c|c|}
\hline
type& closed-form expression & reference \\
\hline
single repair & $\sum_{i=0}^{\min (d,k)-1} \min \{(d-i)\beta, \alpha\}$ & \cite{dimakis2010network} \\
\hline
cooperative multiple repair & $\min_{\mathbf{u} \in P}(\sum_{i=0}^{g-1}u_i\min(\alpha,(d-\sum_{j=0}^{i-1}u_j)\beta_1+(t-u_i)\beta_2)$ & \cite{5962548}\cite{6565355} \\
\hline
concurrent multiple repair & $\min_{\mathbf{u} \in P}(\sum_{i=0}^{g-1}\min(u_i\alpha,(d-\sum_{j=0}^{i-1}u_j)\beta)$ & this paper \\
\hline
\end{tabular}
\end{table*} 
By casting the storage problem as a multicast communication problem, Dimakis \textit{et al.} \cite{dimakis2007benefits}\cite{dimakis2010network} analyzed the cut-based information flow grow graph for single failure.
\begin{lemma}\label{mincutsingle}
\textbf{(Mincuts of Information Flow Graphs in Single Repair Model \cite{dimakis2010network})}: Consider the information flow graph $\mathcal{G}(n, k, d, \alpha, \beta)$ formed by $n$ initial nodes connecting to a virtual source (VS) and obtaining $\alpha$ units data. The additional nodes join the graph by  connecting to $d$ existing nodes, obtaining $\beta$ units from each and storing $\alpha$ units data per node. A data collector (DC) can connect to arbitrary $k$-node subset of $\mathcal{G}$ to reconstruct the original file, which must satisfy:
\begin{equation}
mincut(VS,DC) \ge \sum_{i=0}^{\min (d,k)-1} \min \{(d-i)\beta, \alpha\}
\end{equation}
\end{lemma}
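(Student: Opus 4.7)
The plan is to work directly with the combinatorial cut formulation of the information flow graph and use a topological-order charging argument. First, I would make the model concrete by splitting every storage node $v$ into an in-vertex $v^{in}$ and an out-vertex $v^{out}$ joined by an internal edge of capacity $\alpha$ (enforcing per-node storage); every node regenerated from $d$ helpers then receives $d$ edges of capacity $\beta$ from the helpers' $v^{out}$ into its own $v^{in}$. The virtual source $VS$ connects by infinite-capacity edges to the $n$ initial out-vertices, and the data collector $DC$ is connected by infinite-capacity edges from the $k$ chosen out-vertices. Under this model the lemma reduces to showing that any $(S,T)$-cut with $VS\in S$ and $DC\in T$ has capacity at least $\sum_{i=0}^{\min(d,k)-1}\min\{(d-i)\beta,\alpha\}$.

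Label the $k$ neighbors of $DC$ as $x_1,\ldots,x_k$ in a topological order and assign to each $x_i$ a group $G_i$ of edges consisting of the internal $\alpha$-edge $x_i^{in}\!\to\! x_i^{out}$ together with every $\beta$-edge entering $x_i^{in}$ whose tail $h^{out}$ corresponds to a helper $h\notin\{x_1,\ldots,x_{i-1}\}$. Because at most $i-1$ of the $d$ helpers of $x_i$ can appear among the earlier $x_j$'s, $G_i$ always contains the internal edge plus at least $d-i+1$ of the $\beta$-edges (provided $i\le d$); the $G_i$ are pairwise disjoint since each internal edge is node-specific and each $\beta$-edge has a unique target $x_i^{in}$. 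I would then aim to show that the contribution of $G_i$ to the cut is at least $\min\{\alpha,(d-i+1)\beta\}$, and summing this for $i=1,\ldots,\min(d,k)$ followed by the reindexing $j=i-1$ would recover the desired closed form.

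The core technical step is the per-index bound in the case $x_i^{in}\in T$: since $x_i^{out}\in T$ is forced by the infinite-capacity edge to $DC$, the internal edge is not cut, so only those $\beta$-edges in $G_i$ whose source $h^{out}$ lies in $S$ contribute. The main obstacle is that a helper $h\notin\{x_1,\ldots,x_{i-1}\}$ could still satisfy $h^{out}\in T$, letting its $\beta$-edge escape the charge. My intended resolution is a minimality/exchange argument that restricts attention to a canonical minimum cut (for instance, one maximizing $|S|$) and then propagates charges recursively: any escaping helper $h$ must still contribute cut capacity through either its own internal $\alpha$-edge or further incoming $\beta$-edges, and careful bookkeeping that avoids double-counting when an escaping helper serves as a helper for multiple $x_j$'s can reinstate the $(d-i+1)\beta$ charge per index. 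I expect this canonical-cut reduction to be the hardest part of the proof; once it is in place, summing the per-index bounds over $i=1,\ldots,\min(d,k)$, reindexing $j=i-1$, and observing that terms for $i>\min(d,k)$ are nonnegative and can be discarded yields the stated lower bound.
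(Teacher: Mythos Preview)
The paper itself does not prove this lemma; it is quoted from \cite{dimakis2010network} as background, so there is no in-paper argument to compare against.

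Your framework (in/out node-splitting, topological ordering of the $k$ DC-neighbours, per-index charging) is the right skeleton, and you have correctly located the one real difficulty. But the resolution you sketch does not close the gap. Restricting the index set to $x_1,\dots,x_k$ is precisely the problem: when $x_i^{in}\in T$ and a helper $h\notin\{x_1,\dots,x_{i-1}\}$ satisfies $h^{out}\in T$, the edge $h^{out}\to x_i^{in}$ is uncut, and neither choosing a max-$|S|$ minimum cut nor ``recursive propagation'' recovers the missing charge in the way you need. A single escaping $h$ can serve as a helper of several distinct $x_j$'s, creating a $\beta$-deficit at each such index while itself contributing only one internal $\alpha$-edge (or one batch of incoming $\beta$-edges) to the cut; your bookkeeping would have to split that single contribution across several groups $G_j$, and there is no reason the pieces then add up to $(d-j+1)\beta$ at each $j$. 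Taking a canonical minimum cut does not rule out this configuration.

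The standard fix---essentially the argument in \cite{dimakis2010network}---is simply to enlarge the index set from $\{x_1,\dots,x_k\}$ to $U=\{v:v^{out}\in T\}$, listed in topological order as $u_1,\dots,u_m$ with $m\ge k$. For each $u_j$: if $u_j^{in}\in S$ (forced for initial nodes in any finite cut), charge its internal $\alpha$-edge; if $u_j^{in}\in T$, then every helper $h$ of $u_j$ with $h\notin U$ yields a crossing $\beta$-edge into $u_j^{in}$, and since all $d$ helpers of $u_j$ precede it topologically, at most $j-1$ of them lie in $U$, whence at least $d-j+1$ such edges cross. These per-$j$ charges are disjoint (each is either $u_j$'s own internal edge or an edge terminating at $u_j^{in}$), so keeping only the first $\min(d,k)$ nonnegative terms gives cut value $\ge\sum_{j=1}^{\min(d,k)}\min\{\alpha,(d-j+1)\beta\}$, which is the stated bound after reindexing $i=j-1$. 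Once you work over all of $U$, every ``escaping helper'' is already on the list with its own charge, and no recursion or canonical-cut selection is needed.
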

We call the \textit{capability} of graph $\mathcal{G}^*$ \[C(\mathcal{G}^*)\triangleq \sum_{i=0}^{\min (d,k)-1} \min \{(d-i)\beta, \alpha\}.\]
To guarantee that the collector can reconstruct the origin file, the necessary condition is $C(\mathcal{G}^*) \ge B$, which drives the tradeoff between storage $\alpha$ and repair bandwidth $\gamma = d\beta$. 
Thus, the codes that can achieve every point on this optimal tradeoff curve are called regenerating codes (RGC). Specially, the two extreme points of this bound are greatly practical in real systems, namely, minimum storage regenerating (MSR) codes:
\begin{equation} \label{MSR}
(\alpha_{MSR}, \gamma_{MSR})=(\frac{B}{k}, \frac{Bd}{k(d - k+1)})
\end{equation}
and minimum bandwidth regenerating (MBR) codes:
\begin{equation} \label{MBR}
(\alpha_{MBR},\gamma_{MBR})=(\frac{2Bd}{k(2d - k + 1)}, \frac{2Bd}{k(2d - k + 1)}).
\end{equation}
In repair situation, the range of $d$ is subject to
\begin{equation}\label{d-range}
k \le d \le n-1
\end{equation}
when $d =n-1$, repair bandwidth of both MSR and MBR gets minimal.

According to whether the repaired information is exactly the pre-lost data, there are mainly three repair models based on the two codes: \textit{exact repair}, \textit{functional repair} and \textit{exact repair of systematic parts} \cite{dimakis2011survey}. Functional repair mainly uses random linear network coding technology under proper finite filed \cite{wu2010existence}.  In addition to matrix product (PM) \cite{RashmiShah-75} construction for both exact MSR ($n\ge 2k-1$) and exact MSR , there are mainly interference alignment (IA) \cite{SuhRamchandran-78} construction for MSR ($n\ge 2k-1$) and repair-by-transfer \cite{6062413} mechanism for MBR. 

Note that the interior points on the tradeoff are impossible for exact repair \cite{6062413} and hardly significant for functional repair because of involving more variable parameters, only MSR and MBR are taken into account in practical system.
\subsection{Cooperative Regenerating Codes for Multiple Failure} \label{2b}
Due to the salient feature of minimizing repair traffic for single failure, it is natural to apply RGC step by step to repair multiple failure. However, although the traffic is optimal for each step, the cumulative traffic is not optimal. To improve this strategy, a MFR \cite{wang2010mfr} mechanism is proposed on the minimal storage point. The $t$ replacement nodes are denoted as sequence $y_1, \dots, y_t$ and then $y_i$ can download data from any available nodes, including active original nodes and other repaired newcomers, e.g., $y_j, j<i$. Suppose there are $d_i$ helpers for repairing $y_i$, the lower bound of the total repair traffic is
\begin{equation}\label{MFR}
\sum_{i=1}^{t} \frac{Bd_i}{k(d_i-k+1)}
\end{equation}
In the practical code construction, we split the original file into $mk$ packets, where $m$ is the Least Common Multiple (LCM), namely $m=LCM(d-k+1, \dots, d-k+t)$. 
Whereas, MFR inefficiently uses the help of newcomers since the later repaired newcomers have no contribution to the former ones, e.g., $y_j$ cannot download data from $y_i$ for $j > i$. Hence, the optimal way is to sufficiently exchange information between these newcomers, which is just the achievement of the cooperative regenerating codes \cite{5402494} \cite{5962548} \cite{5978920} \cite{6565355}. 
In the first phase, each newcomer connects to $d$ active helpers and downloads $\beta_1$ unit data from each helper. In the second phase, each newcomer exchanges $\beta_2$ unit with other $t-1$ newcomers.

The cut analysis of information flow graph for cooperative repair drives the following lemma:
\begin{lemma}\label{mincutmultiple}
\textbf{(Mincuts of Information Flow Graphs in Cooperative Multiple Repair Model \cite{5978920})}: Consider the information flow graph $\mathcal{G}(n, k, d, \alpha, \beta, t)$ formed by $n$ initial nodes connecting to a virtual source (VS) and obtain $\alpha$ units data. The additional nodes join the graph in group of $t$ nodes by connecting to $d$ existing nodes, obtaining $\beta_1$ units from each existing node and $\beta_2$ units from other $t-1$ joining nodes and storing $\alpha$ units data per node. A data collector (DC) can connect to arbitrary $k$-node subset of $\mathcal{G}$ to reconstruct the original file, which must satisfy:
\begin{gather}
mincut(VS,DC) \ge \notag \\ 
\min_{\mathbf{u} \in P}(\sum_{i=0}^{g-1}u_i\min(\alpha,(d-\sum_{j=0}^{i-1}u_j)\beta_1+(t-u_i)\beta_2)
\end{gather}
where $\mathbf{u}=[u_i]_{1 \times g}$, $\sum_{i=0}^{g-1}u_i=k, 1 \le u_i \le t$, $\lceil \frac{k}{t} \rceil \le g \le k$, and $P$ is set of all possible $\mathbf{u}$. Set $\sum_{j=0}^{-1}u_j = 0$.
\end{lemma}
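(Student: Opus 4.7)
The plan is to mirror the cut-analysis argument that Dimakis et al.\ used for the single-failure case (Lemma~\ref{mincutsingle}), but adapted to the two-phase cooperative repair graph. First I would make the information flow graph $\mathcal{G}$ explicit: split every storage node $v$ into $v^{\text{in}},v^{\text{out}}$ joined by an edge of capacity $\alpha$; attach VS to each initial $v^{\text{in}}$ by an infinite-capacity edge; for every newcomer $w$ in a joining group of size $t$, draw $d$ edges of capacity $\beta_1$ from the chosen helpers' out-vertices to $w^{\text{in}}$, plus $t-1$ edges of capacity $\beta_2$ from the other newcomers' out-vertices; finally let DC connect to $k$ storage nodes by infinite-capacity edges. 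The mincut lower bound is then to be achieved by exhibiting a particular failure/repair/collector pattern and counting the cut edges in the worst case.

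Next I would set up the worst-case pattern. Assume $g$ successive cooperative repair events have occurred and let the DC connect to $u_i$ nodes from the $i$-th group, so $\sum_{i=0}^{g-1} u_i = k$, $1\le u_i\le t$, $\lceil k/t\rceil \le g\le k$. Without loss of generality arrange the failure history so that the $d$ helpers chosen to regenerate group $i$ are taken first from the $\sum_{j<i} u_j$ nodes that DC will later contact in previous groups, and then from fresh nodes. This is the ordering that maximises how much of each group's incoming traffic originates on the source side of the cut, and it is what makes the resulting mincut as small as possible.

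Then I would compute the cut. For each node $w$ in group $i$ that the DC touches, the cut must separate VS from $w$, so it must cut either the $\alpha$-edge between $w^{\text{in}}$ and $w^{\text{out}}$, or all of $w$'s incoming edges that do not already come from the source side. Under the ordering above, the first $\sum_{j<i} u_j$ helpers of group $i$ are already on the source side, so the remaining helper contribution is $(d-\sum_{j<i}u_j)\beta_1$, and the cooperating edges from the other $u_i-1$ DC-chosen newcomers of the same group are likewise on the source side, leaving $(t-u_i)\beta_2$. Hence the minimum contribution of that node to the cut is $\min\bigl(\alpha,(d-\sum_{j<i}u_j)\beta_1+(t-u_i)\beta_2\bigr)$, and summing over the $u_i$ DC-chosen nodes in group $i$ and over $i$ gives the expression in the statement. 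Minimising over all admissible $\mathbf{u}\in P$ yields the claimed lower bound, because any other valid failure/collector pattern can only inflate the cut.

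The routine part is the bookkeeping above; the genuine obstacle is justifying that the ordering I picked actually realises the minimum, i.e.\ that no other topological arrangement of past failures can produce a smaller cut. I would handle this by arguing that swapping any helper-node for a node not already on the source side can only increase the number of cross-cut edges, so a simple exchange argument shows the canonical ordering is extremal. A second subtlety is ensuring the two-phase cooperative edges are treated consistently with the causality of the graph (a newcomer cannot send $\beta_2$ to a peer before both have received their helper data); this is checked by verifying that every edge in the computed cut respects the temporal partial order of $\mathcal{G}$, which it does because we always cut an incoming edge of a later node, never an outgoing edge of a non-existent one.
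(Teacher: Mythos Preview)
The paper does not actually prove Lemma~\ref{mincutmultiple}; it is quoted verbatim as a known result from the cooperative-repair literature (Kermarrec \emph{et al.}~\cite{5978920}, and also Shum--Hu~\cite{5962548,6565355}). So there is no ``paper's own proof'' to compare against. What the paper \emph{does} prove is the analogous Theorem~\ref{generalMinCut} for the concurrent model, and the argument there is exactly the same style of cut analysis you outline: build the DC out of $u_i$ newcomers from $g$ successive repair groups, arrange the helpers so that as many as possible are already on the DC side, and then for each group decide whether to cut the storage edges or the residual incoming edges.

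Your sketch is the standard argument from \cite{5978920} and is essentially correct. Two small points worth tightening: (i) in the usual graph model for cooperative repair the $\beta_2$ exchange edges do not emanate from the peers' out-vertices but from an intermediate node inserted between the phase-1 download and the final storage, so the causality check you mention is handled structurally rather than by a separate temporal argument; (ii) the exchange argument showing your ordering is extremal is indeed the crux, and in the cited references it is done by the same ``swap a helper for one already on the DC side'' monotonicity you describe. Neither of these is a gap, just places where the write-up should match the graph model used in \cite{5978920}.
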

In \cite{5962548}\cite{6565355}, the \textit{capability} of the flow graph is simplified into 
\[C(\mathcal{G}^*)\triangleq u_0\alpha + \sum_{i=0}^{g-1}u_i((d-\sum_{j=0}^{i-1}u_j)\beta_1+(t-u_i)\beta_2).\]
Then, for $i = 0,\dots, g-1$, the storage-bandwidth tradeoff is
\begin{equation}
\alpha=\frac{B(d-k+t(i+1))}{D_i}
\end{equation}
\begin{equation}
\gamma=\frac{Bt(d+t-1)}{D_i}
\end{equation}
where
\begin{equation} \label{Di}
D_i = k(d+t(i+1)-k)-\frac{i(i+1)t^2}{2}
\end{equation}

The two extreme points on this tradeoff curve are respectively minimal storage cooperative regenerating (MSCR) point and  minimal bandwidth cooperative regenerating (MBCR) point.
For MSCR,
\begin{equation} \label{MSCR}
(\alpha_{MSCR}, \gamma_{MSCR})=(\frac{B}{k}, \frac{Bt(d+t-1)}{k(d - k+t)})
\end{equation}
For MBCR,
\begin{equation} \label{MBCR}
(\alpha_{MSCR}, \gamma_{MSCR})=(\frac{B(2d+t-1)}{k(2d+t-k)}, \frac{Bt(2d+t-1)}{k(2d+t-k)})
\end{equation}
In the same literature \cite{6565355}, Shum, K.W. \textit{et al.} proposed the functional construction of the two codes based on linear network coding. 
The explicit product-matrix based constructions of exact MBCR and exact MSCR are proposed in \cite{6566803} and in \cite{6847953} respectively. Similarly, Chen \textit{et al.} \cite{6620465} introduce the interference alignment to MSCR.

Cooperative repair can be adopted in a regular distributed and robust network. However, due to the shortcomings would complicate the management, the benefits of this approach may be outweighed by its drawbacks in the real systems, like the huge computer cluster of GFS2.
\subsection{Concurrent Repair Mechanism}\label{2c}
\begin{figure*}[!t] \centering
	\includegraphics[width=6.5in]{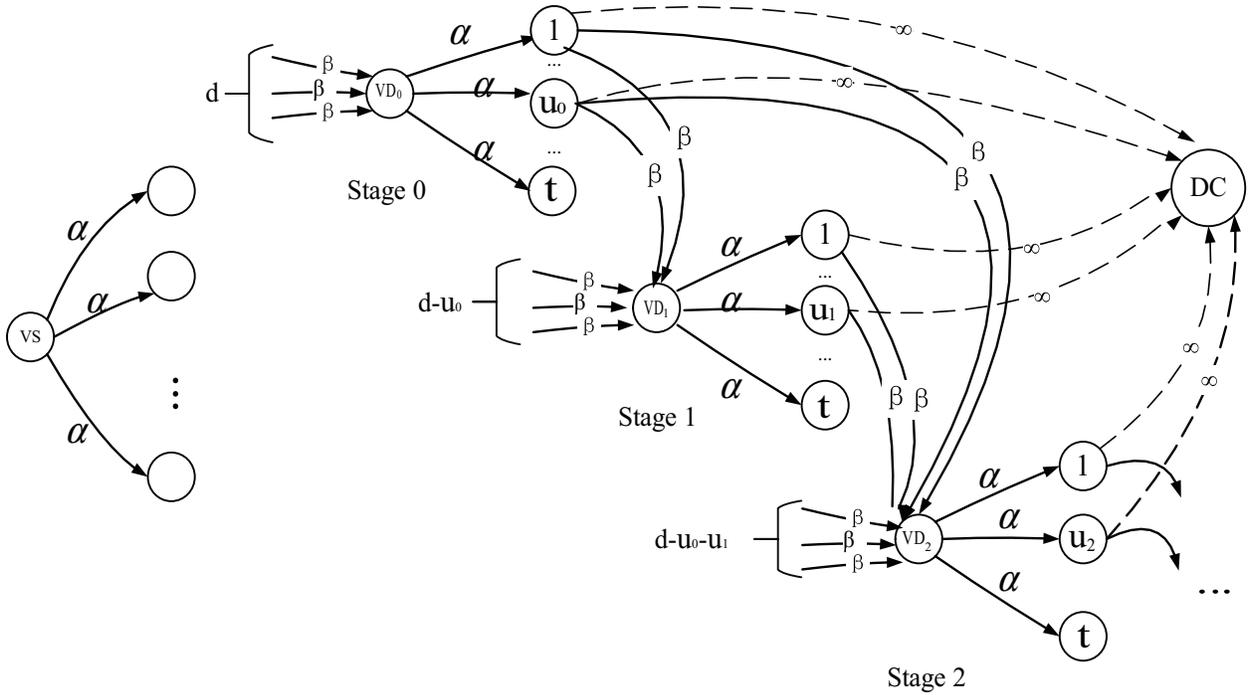}
	\caption{The information graph $\mathbf{G}(n, k, d, \alpha, \beta, t)$ of concurrent repair mechanism }
	\label{fig_mincut}
\end{figure*}
Li \textit{et al.} separately realize related application in NCFS \cite{hu2011ncfs}, NCCloud \cite{hu2012nccloud} and CORE \cite{DBLP:journals/corr/abs-1302-3344}. NCFS presented its bandwidth saving for single repair under same tolerant nodes as Reed-Solomon codes, NCCloud shifts NCFS in cloud storage and shows functional MSR can minimize both repair bandwidth and I/O in archive (cold) data storage. Based on the previous work of NCFS and NCCloud, CORE supports both single and multiple failure repair. In CORE, the authors proposed a concurrent repair mechanism . When $t$ nodes fail, an engine starts to download $\gamma$ data from $d$ available nodes and then dispatch recovered data to newcomers, the lower bound of repair bandwidth is:
\begin{equation} \label{PatreekMSR}
\gamma =
\begin{cases}
\frac{Btd}{k(d+t-k)} & t<k, \\
B & t \ge k.
\end{cases}
\end{equation}
where $\alpha =\frac{B}{k}$. For conventional erasure coded storage system, once the number of failure reaches a certain threshold, a daemon starts to collect $B$ data, re-encode $t$ pieces and then dispatch them into $t$ replacement nodes. In this way, only the "collect" bandwidth is treated as the actual network cost. So the traffic from the engine to the newcomers has no contribution to $\gamma$ in this framework. Namely, if the engine runs in one newcomer, the traffic among $t$ newcomers (i.e., the bandwidth of the second phase in cooperative repair) is ignored.  

Apart from the minimal storage point, a closer study of concurrent repair mechanism is necessary because of its practical applications. We present a similar storage-bandwidth tradeoff based on the cut analysis of information flow graph.   

\section{Concurrent Regenerating Codes} \label{3}
\subsection{A Cut-set Bound Analysis} \label{3a}
We analyze the cut of the information flow graph and find a lower bound of bandwidth for generating multiple nodes in the concurrent repair as CORE. By referring the procedure of literature \cite{5978920}, we have the following theory.
 
\begin{theorem} \label{generalMinCut}
Denote the information flow graph as $\mathbf{G}(n, k, d, \alpha, \beta, t)$ formed by $n$ initial nodes connecting to a virtual source (VS) and obtaining $\alpha$ units data. There are $t$ newcomers connecting to a virtual daemon (VD) and storing $\alpha$ units data for each. The virtual daemon connect to $d$ existing nodes and obtaining $\beta$ units from each. Any data connector (DC) can connect to any $k$-subset of nodes of G must satisfy
\begin{equation} \label{equmincut}
mincut(VS,DC) \ge \min_{\mathbf{u} \in P}(\sum_{i=0}^{g-1}\min(u_i\alpha,(d-\sum_{j=0}^{i-1}u_j)\beta)
\end{equation}
with $P = \{\mathbf{u}=[u_i]_{1 \times g} |\sum_{i=0}^{g-1}u_i=k, 1 \le u_i \le t \}$.
\end{theorem}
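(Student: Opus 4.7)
The plan is to mirror the cut-set argument for the cooperative repair model of Lemma~\ref{mincutmultiple}, but adapt it to the concurrent architecture in which a single virtual daemon VD aggregates the helper traffic and fans it out to the $t$ newcomers. This structural change is what turns the cooperative per-generation contribution $u_i\min(\alpha,\cdot)$ into the concurrent contribution $\min(u_i\alpha,\cdot)$, because in each generation all $u_i$ selected newcomers share one upstream bottleneck instead of having independent downloads.

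First I would formalize the flow graph via the standard node-splitting trick: replace every storage node $v$ by $v^{\mathrm{in}}\to v^{\mathrm{out}}$ joined by an internal edge of capacity $\alpha$; give VS an $\infty$-edge to each initial $v^{\mathrm{in}}$ and give DC an $\infty$-edge from each selected $v^{\mathrm{out}}$. For the $i$-th regeneration event introduce a vertex $\mathrm{VD}_i$ with $d$ incoming $\beta$-edges from the chosen helpers' out-nodes and $t$ outgoing $\infty$-edges into the new in-nodes of that generation. Pick any edge-cut $(S,T)$ with $\mathrm{VS}\in S$ and $\mathrm{DC}\in T$; the $\infty$-edges into DC force every selected $v^{\mathrm{out}}$ into $T$.

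Next, order generations chronologically and let $u_i$ denote the number of selected out-nodes lying in generation $G_i$, so that $\mathbf{u}=(u_0,\ldots,u_{g-1})\in P$. For each $G_i$, one of two things must happen: either (a) all $u_i$ in-nodes lie in $S$ and the cut severs $u_i$ internal edges at cost $u_i\alpha$, or (b) at least one in-node lies in $T$, which via the $\infty$-capacity $\mathrm{VD}_i$-to-newcomer edges forces $\mathrm{VD}_i\in T$ and therefore forces the cut to isolate $\mathrm{VD}_i$ from VS by severing enough helper-to-$\mathrm{VD}_i$ edges. Since the graph designer can arrange the helpers of $G_i$ to include the $\sum_{j<i}u_j$ previously selected newcomers (already in $T$, hence their $\beta$-edges cost nothing new), the cheapest incarnation of (b) still costs $(d-\sum_{j=0}^{i-1}u_j)\beta$. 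Because $\mathrm{VD}_i$ is a single shared vertex, a hybrid split in which some $v^{\mathrm{in}}$ of $G_i$ go to $S$ and others to $T$ still triggers the full upstream cost of (b) and is therefore no cheaper than the better of (a) and (b). Hence the contribution of $G_i$ to the cut is at least $\min(u_i\alpha,\,(d-\sum_{j=0}^{i-1}u_j)\beta)$; summing over $i$ and minimizing over $\mathbf{u}\in P$ yields the claimed lower bound.

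The main obstacle I expect is justifying the \emph{hybrid-is-no-better-than-pure} step, which is precisely where the concurrent model genuinely departs from the cooperative one: in the cooperative graph each newcomer has its own upstream sub-tree, so mixing strategies across a generation gives the additive $u_i\min(\alpha,\cdot)$ form, whereas in the concurrent graph the shared $\mathrm{VD}_i$ makes the upstream cost a one-shot expenditure rather than a per-newcomer one, so the $\min$ lives outside the factor $u_i$. A secondary subtlety is the degenerate regime $\sum_{j<i}u_j\ge d$, in which the upstream term becomes non-positive; a short case split confirms that there the $\min$ is automatically realized by the internal-edge branch and the bound remains correct.
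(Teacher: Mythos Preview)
Your proposal is correct and follows essentially the same cut-set argument as the paper. The paper models the daemon-to-newcomer links as capacity-$\alpha$ edges (so the case split is directly on whether $\mathrm{VD}_i\in A$ or $\mathrm{VD}_i\in\bar A$) rather than your node-splitting with $\infty$-edges, and it phrases the ``designer arranges helpers'' step via an explicit regenerating scenario $\mathbf v$ together with a case (i)/(ii) analysis showing $\mathbf v=\mathbf u$ is worst; but these are packaging differences, and your hybrid-is-no-cheaper observation is exactly the content that the paper's $\mathrm{VD}_i$-based split encodes. One small remark: your degenerate worry $\sum_{j<i}u_j\ge d$ cannot occur here since $\sum_{j<i}u_j\le k-u_i<k\le d$.
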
 

\begin{proof}
In the graph illustrated in figure \ref{fig_mincut}, there are initially $n$ nodes labeled from $1$ to $n$ connecting to virtual source VS with each edge capacity of $\alpha$. At time stage $i$, a virtual daemon labeled as $VD_i$ is used to generate $t$ newcomers labeled as $x_1^i, \dots, x_t^i$, each of which connects to $VD_i$ with edge capacity of $\alpha$. We call these nodes the children of $VD_i$.
 
To recover the whole file, a data collector labeled as $DC$ collects data from any $k$-node subset of the whole active nodes, denoted as $U$. Suppose DC selects $u_i$ children of $VD_{i}$ (the corresponding set is $U_{i}$) and there are $g$ group of children to construct $U$. Let $I = \left\lbrace 0,\dots, g-1\right\rbrace$ denote the set of group labels. Then, a recovery scenario is defined as a sequence $\mathbf{u} = [u_i]$, $i \in I$, $1 \le \left|U_i \right|=u_i \le t$ and $\sum_{i=1}^{g}u_i=k$, $\lceil \frac{k}{t} \rceil \le g \le k$, $U=\bigcup_{i=1}^gU_i$. Also, let $P$ be the set of all the recovery scenarios.


For each regenerating process, the daemon node $VD_i$ connects to $d$ helpers. One part of the helpers is composed of $\sum_{j=0}^{i-1} v_j$ newcomers, where $v_j$ is the number of selected children of $VD_j, j=\left\lbrace0, \dots i-1 \right\rbrace$, and the other part is composed of $d-\sum_{j=0}^{i-1} v_j$ ($>0$) initial nodes. 
Thus, sequence $ v_0,\dots, v_{g-1} $ constructs a regenerating scenario $\mathbf{v}=[v_i], i \in I, 1 \le v_i \le t$. Let $V_i$ be the corresponding set of $v_i$ helpers, then $V=\bigcup V_i$. Also, let $Q$ be the set of such scenarios. Since the number of generating stage is unlimited while recovery scenario is constrained by $g$ groups, only considering the same number of groups for regenerating scenarios can be enough analyzing the min-cut. In other words, we have
\begin{equation} \label{QincludeP}
\forall \mathbf{u} \in P, \exists \mathbf{v} \in Q \to \mathbf{u}=\mathbf{v}.
\end{equation}

Note that in both scenarios we have $j < i$, for nodes of the $i$-th stage cannot depend on nodes considered at $j$-th stage with $j > i$.

Moreover, we denote the intersection of helper set and generating set on the $i$-th stage as $L_{i}=U_{i} \bigcap V_{i}$ with size of $\ell_{i}=\left|L_i \right|$.
 
For a regenerating scenario $\mathbf{v} \in Q$ and a recovery scenario $\mathbf{u} \in P$, we consider the min-cuts ($A,\bar{A}$) with $VS \in A$ and $U \subset \bar{A}$.

Let $\mathcal{C}$ denote the edges in the cut. Firstly, consider the  $\mathcal{C}$ on the $i$-th stage, denoted as $\mathcal{C}_{i}$. 
\begin{figure}[t]
\begin{center}
\subfigure[]{\includegraphics[width=0.45\textwidth]{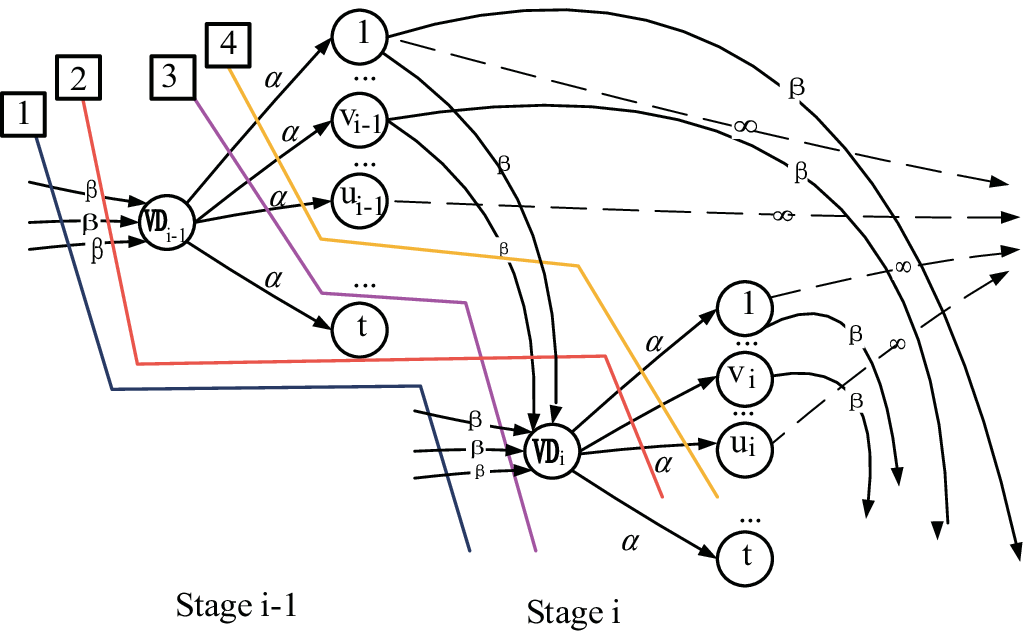}}
\\
\subfigure[]{\includegraphics[width=0.45\textwidth]{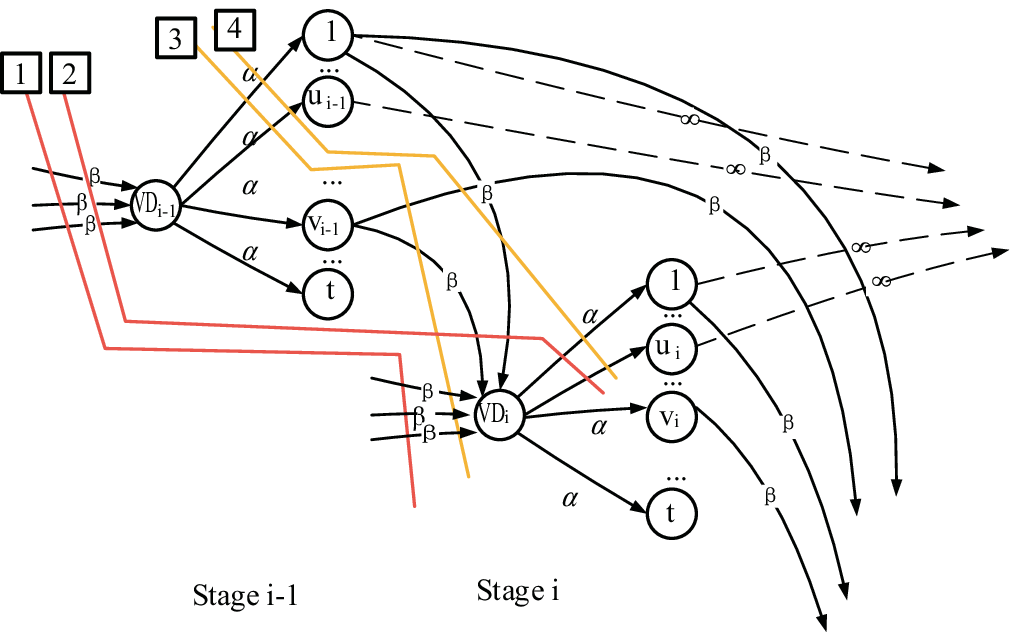}}
\caption{\small{The two cases of cut at stage $i$. (a) case ($i$) (b) case ($ii$).}}\label{fig_cutstagei}
\end{center}
\end{figure}
If $VD_i \in A$, then the edges $VD_i \to x_j^i \in \mathcal{C}_{i}$, where $j \in U_i$, each with capacity of $\alpha$. 

If $VD_i \in \bar{A}$, there are two cases to be considered.
\begin{enumerate}
\item[($\romannumeral1$)]As illustrated in figure \ref{fig_cutstagei}(a), if $\forall j < i$, $VD_{j} \in \bar{A}$, then the $d-\sum_{j=0}^{i-1}v_{j}$ edges from initial node are in $\mathcal{C}_{i}$, each with capacity of $\beta$.
\item[($\romannumeral2$)]In figure \ref{fig_cutstagei}(b), if $\exists S \subset \left\lbrace 0,\dots,i-1 \right\rbrace$,$VD_{j:j \in S} \in A$, both the $d-\sum_{j=0}^{i-1}v_{j}$ edges from initial node and $\sum_{ j \in S }(v_{j}-\ell_{j})$ edges from $\bigcup_{j \in S} (V_{j}/L_{j})$ are in $\mathcal{C}$, each with capacity of $\beta$, since there are edges from children of $VD_{s} \in S$ to $VD_i$.
\end{enumerate}
Therefore, the minimal value of $\mathcal{C}_{i}$ is
\begin{equation} \label{minCi}
\min \mathcal{C}_{i}=\min (u_i\alpha, c_i)
\end{equation}
where
\begin{numcases}
{c_i=}
(d-\sum_{j=0}^{i-1}v_i)\beta, & case ($\romannumeral1$)\\
(d-\sum_{j=0}^{i-1}v_i+ \bigcup_{j \in S}(v_{j}-\ell_{j}))\beta, & case ($\romannumeral2$)
\end{numcases}
Since $\forall j \ge 0, \ell_{j} \le v_{j}$ and when $V_{j} \subset U_{j}$ the cut of case ($\romannumeral2$) reach the same minimal value as case ($\romannumeral2$), then $v_{i} \le u_{i}$ on the $i$-th stage.
Refer to (\ref{QincludeP}), we need to maximize $v_{i}$ to minimize $c_i$ for all the regenerating scenarios under a recovery scenario. Thus if $u_i=v_i$ or $U_{j} = V_{j}$, (\ref{minCi}) can be rewritten as
\begin{equation}
\min \mathcal{C}_{i}=\min (u_i\alpha, (d-\sum_{j=0}^{i-1}u_i)\beta)).
\end{equation}
Finally, we have min-cut of $\mathcal{C}$ under certain recovery scenario
\begin{equation} \label{minC}
\min \mathcal{C} = \sum_{i=0}^{g-1}\min \mathcal{C}_{i}=\sum_{i=0}^{g-1}\min (u_i\alpha, (d-\sum_{j=0}^{i-1}u_i)\beta))
\end{equation}
If $\forall \mathbf{u} \in P$, the minimal value of (\ref{minC}) is just the min-cut of (VS, DC). Hence the claim follows.
\end{proof}
\subsection{The Storage-bandwidth Tradeoff} \label{3b}
Next, we need to find the similar tradeoff between storage and bandwidth. With the same assumption as \cite{dimakis2010network}, an valid generation must satisfy
\begin{equation} \label{mincutbound}
\widehat{C} \triangleq \min_{\mathbf{u} \in P}\sum_{i=0}^{g-1}\min(u_i\alpha,(d-\sum_{j=0}^{i-1}u_j)\beta) \ge B.
\end{equation}
where $\widehat{C}$ denotes the \textit{capacity} of $\mathbf{G}(n, k, d, \alpha, \beta, t)$. 

Note that single node repair is a special situation of (\ref{mincutbound}) when $t = 1$, for there is only one recovery scenario, $\mathbf{u}=[1]_{1\times k}$ . Hereinafter, we assume $t>1$ if there is no special specification. 
Firstly, it is easy to compute (\ref{minC}) under a certain $\mathbf{u}$ while hard to find the minimal one under all the $\mathbf{u}$. Since the total number of recovery scenarios grows exponentially as $t$ growing, which is , in fact, an integer composition problem that satisfies
$\sum_{i=0}^{g-1}u_i = k$
and subjects to $1 \le u_i \le t$. 
As a example in table \ref{table_intpartitions}, we enumerate all the possible compositions of $\mathbf{u}$ when $k = 7$ and $t=3$. To be simple, the second column of table only illustrates all the partitions (composition is an order matter of partition) of $\mathbf{u}$ and the third column list the number of corresponding compositions. Therefore, it is unpractical to check every compositions. Then we have the following theorem:
\begin{table} \centering 
\caption{partitions of integer k=7 when t=3} \label{table_intpartitions}
\begin{tabular}{|c|c|c|}
\hline
$g$ & $\mathbf{u}$ & number of cases \\
\hline
7 & [1,1,1,1,1,1,1] & 1\\
\hline
6 & [2,1,1,1,1,1] & 6\\
\hline
5 & [3,1,1,1,1];[2,2,1,1,1] & 15\\
\hline
4 & [2,2,2,1];[3,2,1,1] & 16\\
\hline
3 & [2,2,3];[3,3,1] & 6\\
\hline
\end{tabular}
\end{table}
\begin{figure}[t]
\begin{center}
\subfigure[]{\includegraphics[width=0.45\textwidth]{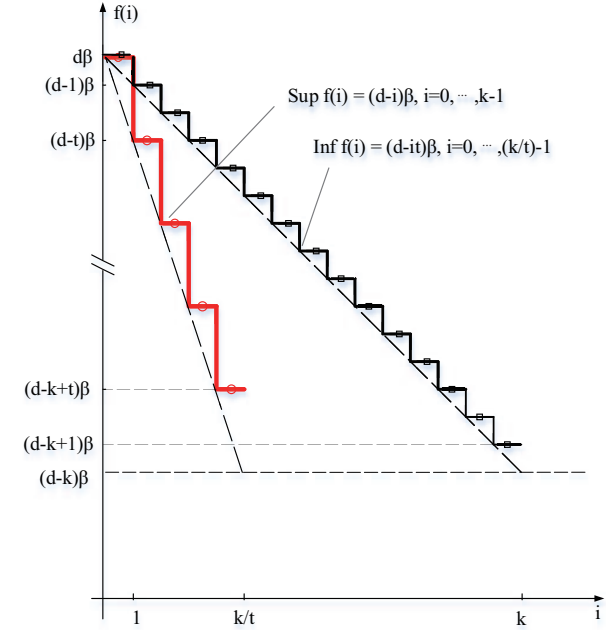}}
\subfigure[]{\includegraphics[width=0.45\textwidth]{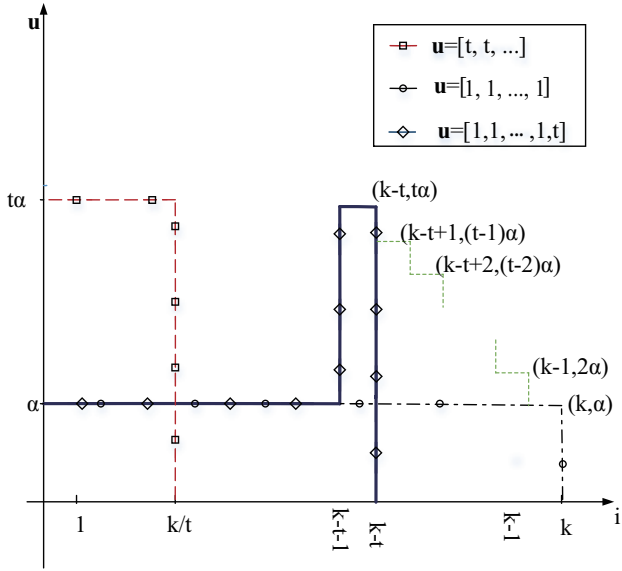}}
\caption{\small{The stair function $f(i)$ and $g(i)$ under different $\mathbf{u}$. (a) The lower and upper bound of $f(i)$. (b) three examples of $g(i)$ }}\label{fig_function_fg}
\end{center}
\end{figure}
\begin{theorem}
the \textit{capacity} of $\mathbf{G}(n, k, d, \alpha, \beta, t)$ is minimal if and only if the composition of $k$ is
\begin{equation} \label{valueof-u}
\mathbf{u}=
\begin{cases}
[t,\dots,t]_{1 \times g} & t \mid k,\\
[k-\lfloor \frac{k}{t}\rfloor t,t,\dots,t]_{1\times g} & t \nmid k.
\end{cases}
\end{equation}
\end{theorem}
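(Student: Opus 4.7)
The plan is to show that any valid composition $\mathbf{u}$ can be transformed into the claimed $\mathbf{u}^\star$ via two local moves that never increase
\[ \widehat{C}(\mathbf{u}) = \sum_{i=0}^{g-1} \min\bigl(u_i\alpha,\, (d-s_i)\beta\bigr),\qquad s_i = \sum_{j<i} u_j, \]
where $\mathbf{u}^\star = [t,\dots,t]_{1\times g}$ when $t \mid k$, and $\mathbf{u}^\star = [r, t, \dots, t]_{1\times g}$ with $r = k - \lfloor k/t\rfloor t$ otherwise. The two moves are (i) merging two adjacent parts whose sum is at most $t$, and (ii) shifting any residual small part $u_p<t$ to position~$0$. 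The target $\mathbf{u}^\star$ is precisely the merge-irreducible composition whose small part, if any, sits at the left end.

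For (i), abbreviate $C_i := (d - s_i)\beta$, and note $C_{i+1} = C_i - u_i\beta$. The merge inequality
\[ \min\bigl((u_i+u_{i+1})\alpha,\, C_i\bigr) \;\le\; \min(u_i\alpha, C_i) + \min(u_{i+1}\alpha, C_{i+1}) \]
follows from a four-way case split on which side of each $\min$ is active: three cases collapse to equality, and the remaining case reduces to the hypothesis $\alpha \ge \beta$, which holds on every meaningful part of the tradeoff (both MSR and MBR satisfy $\beta \le \alpha$, as a helper cannot emit more than it stores). Exhausting merges drives $\mathbf{u}$ to a form with $u_i + u_{i+1} > t$ for every $i$, so at most one entry is strictly less than $t$. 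For (ii), I would pair such a merge-irreducible $\mathbf{u}$ with its left-shifted version and verify termwise that relocating the small entry $r$ from an interior position $p$ to position $0$ weakly decreases $\widehat{C}$: the cap $C_0$ is maximal so the relocated entry contributes $r\alpha$ in the linear regime, while every subsequent prefix sum decreases by $t - r$, weakly shrinking every cap-dominated term. Iterating (i) and (ii) reaches $\mathbf{u}^\star$, giving $\widehat{C}(\mathbf{u}) \ge \widehat{C}(\mathbf{u}^\star)$.

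The delicate step is (ii): small-scale checks (for instance $(1,t,1)$ versus $(2,t)$ when $k = t+2$) show the termwise inequality is sometimes tight rather than strict, so the stated ``if and only if'' is best read as ``$\mathbf{u}^\star$ attains the minimum'' rather than strict uniqueness of the minimiser. Should the pairing argument run into awkward edge cases, a safety net is to evaluate $\widehat{C}(\mathbf{u}^\star)$ in closed form by locating the threshold index $i^\dagger$ beyond which $t\alpha > C_i$, splitting the sum into a linear part $\alpha\sum_{i<i^\dagger} u_i$ and a cap-dominated tail, and then directly checking that no other composition can undercut this value. Combining the two steps then yields the bound (\ref{mincutbound}) with $\mathbf{u}^\star$ as a minimiser, which is what the theorem asserts.
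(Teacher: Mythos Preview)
Your local-move reduction is a genuinely different route from the paper's argument. The paper interprets $\widehat{C}(\mathbf{u})$ geometrically as the area under a stair function $i\mapsto\min\{u_i\alpha,(d-s_i)\beta\}$ and asserts that $\mathbf{u}^\star$ produces the lowest staircase; it is more of a heuristic picture than a proof. Your combinatorial scheme is more systematic and, once repaired, would actually be more rigorous than what the paper offers. That said, two steps in your outline do not go through as written.

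First, merge-irreducibility does \emph{not} force ``at most one entry strictly below $t$.'' Take $t\ge 3$, $k=t+2$, and $\mathbf{u}=(1,t,1)$: every adjacent pair sums to $t+1>t$, so no merge applies, yet two entries are below $t$; the pattern $(1,t,1,t,\dots)$ extends this arbitrarily. To reach the form $[r,t,\dots,t]$ you need an additional move beyond adjacent merging---for instance a leftward transfer that increments $u_i$ at the expense of some later $u_j$---together with a proof that such a transfer does not raise $\widehat{C}$.

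Second, your justification of move~(ii) has the sign reversed. Relocating $r$ from position $p$ to position $0$ \emph{decreases} each prefix sum $s_i$ for $1\le i\le p$ by $t-r$, hence \emph{raises} the caps $C_i=(d-s_i)\beta$ and weakly \emph{increases} any cap-dominated term, not shrinks it. The desired inequality can still be obtained in the regime $t\alpha\le d\beta$ by a shifted pairing (compare the new term at index $i$ with the old term at index $i-1$, where the cap falls by $r\beta$, and handle new index $0$ versus old index $p$ separately), but outside that regime the move is simply false: with $k=7$, $d=10$, $t=3$, $\alpha=11\beta$ one gets $\widehat{C}(3,3,1)=21\beta<25\beta=\widehat{C}(1,3,3)$. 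So your caveat about ``if and only if'' is well taken, and in fact any correct version of the theorem must be restricted to the tradeoff range $\alpha\le d\beta/t$; the paper leaves this implicit.
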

\begin{proof}
Consider the function $f(i)=(d-\sum_{j=0}^{i-1}u_{j})\beta$ and $g(i) = u_i\alpha$, $0 \le i \le g-1$. Define set $Y = \left\lbrace u_i | g(i) \ge f(i), i \in I  \right\rbrace$ and $\bar{Y} = \left\lbrace u_i | g(i) < f(i), i \in I   \right\rbrace$. 
Suppose $t \mid k$, when $\mathbf{u}$ take the value of (\ref{valueof-u}), we can obtain the lower bound of $f(i)$ 
\begin{equation*}
\inf f(i) = (d-it)\beta 
\end{equation*}
when $\mathbf{u}=[1]_{1\times k}$, the upper bound is \[\sup f(i) = (d-i)\beta.\]
As illustrated in figure \ref{fig_function_fg}(a), for all the other kinds of $\mathbf{u}$, $f(i)$ locates between the two bound. 
Here we use a stair function to describe $f(i)$ since the area closed by $i$-axis, vertical axis, $f(i)$ and $i=g-1$ equals to $\widehat{C}$ when $\bar{Y} = \emptyset$. Note that $\inf f(i)$ make the area minimal.
On the other respect, function $g(i) = u_i\alpha$ with $\mathbf{u} = [t,t,\dots]$ and $\mathbf{u} = [1,1,\dots]$ is depicted in figure \ref{fig_function_fg}(b). When $Y = \emptyset$, the similar closed area reflects the value of $\widehat{C}$. Note that the area closed purely by arbitrary $g(i)$ equals to $k\alpha$.
 
We assume that $\beta$ is fixed and $\alpha$ is variable and merge the two function into the same coordinate system. Then $\widehat{C}$ equals to the intersection area closed by $i$-axis, vertical axis, $\min (g(i),f(i))$ and $i=g-1$ for all the situations of $Y$. Because $\inf f(i)$ make the minimal area, thus we only scale $g(i)$ with the same $\mathbf{u}$ as $\inf f(i)$ to find minimal $\widehat{C}$ among all $\mathbf{u} \in P$.

Suppose $t \nmid k$ and $k-\lfloor \frac{k}{t}\rfloor t=r$. From the analysis of first part, we know the minimal area is achieved if $g$ is maximal. Thus, we only consider the permutation of $\mathbf{u}=[r,t,\dots,t]_{1\times g}$.  
For $d\beta \ge t\alpha$ and $r\alpha$ is the smallest area that contribute to $min(g(i),f(i))$, there is only one possibility for minimizing bandwidth if there exist $Y=\left\lbrace u_i| t\alpha \ge g(i) \right\rbrace$. Thus, let $\min \mathcal{C}_0 =\min (r\alpha,d\beta)=r\alpha$, $\widehat{C}$ can be minimized as the change of $\alpha$. In this way, $\inf f(i)=(d'-it)\beta=(d-r-it)\beta, i=1,\dots,g-1$.
\end{proof}
The following examples present the minimal capacity: 
\begin{example}
Consider \[n= 14,k=6,d=10,t=3,\] 
$C*= \min(3\alpha,d\beta) + \min(3\alpha,(d-t)\beta)$
\begin{numcases}
{C^*=}
6\alpha, & $\alpha \in (0,\frac{7}{3}\beta]$ \notag \\
3\alpha + 7\beta, & $\alpha \in (\frac{7}{3}\beta,\frac{10}{3}\beta]$ \notag \\
17\beta & $\alpha \in (\frac{10}{3}\beta, +\infty)$ \notag 
\end{numcases}
\end{example}
\begin{example}
Consider \[n= 14,k=7,d=10,t=3,\]
$C*= \min(\alpha,d\beta) + \min(3\alpha,(d-1)\beta)+\min(3\alpha,(d-t-1)\beta)$
\begin{numcases}
{C^*=}
7\alpha, & $\alpha \in (0,\frac{6}{3}\beta]$ \notag \\
4\alpha + 6\beta, & $\alpha \in (\frac{6}{3}\beta,\frac{9}{3}\beta]$ \notag \\
\alpha + 15\beta, & $\alpha \in (\frac{9}{3}\beta,10\beta]$ \notag \\
21\beta, & $\alpha \in (10\beta, +\infty) $ \notag 
\end{numcases}
\end{example}

To simplify notation, introduce $b_i = \inf f(i)$, we have the general form:
\begin{equation}
\mathcal{C}_* =
\begin{cases}
k\alpha & \alpha \in [0,\frac{b_{g-1}}{t}]\\
(k-t)\alpha + b_{g-1} & \alpha \in (\frac{b_{g-1}}{t},\frac{b_{g-2}}{t}] \\
\vdots & \vdots \\
(k-(g-1)t)\alpha + \sum_{j=1}^{g-1} b_j & \alpha \in (\frac{b_{1}}{t},\frac{b_0}{k-(g-1)t}]\\ 
b_0 + b_1 + \dots + b_{g-1} & \alpha \in (\frac{b_0}{k-(g-1)t},+\infty)
\end{cases}
\end{equation}
Suppose the minimal $\alpha^*$ such that $\mathcal{C}_* = B$ and $B \le b_0 + b_1 + \dots + b_{g-1}$, then 
\begin{equation}
\alpha_* =
\begin{cases}
\frac{B}{k} & B \in [0,\frac{kb_{g-1}}{t}]\\
\frac{B- b_{g-1}}{k-t} & B \in (\frac{kb_{g-1}}{t},\frac{(k-t)b_{g-2}}{t}+b_{g-1}] \\
\vdots & \vdots \\
\frac{B-\sum_{j=1}^{g-1}b_j}{k-(g-1)t} & B \in (\frac{(k-(g-1)t)b_1}{t}+\sum_{j=1}^{g-1}b_j, \sum_{j=0}^{g-1}b_j]
\end{cases}
\end{equation} 
Let $\beta = \frac{\gamma}{d}$, for $i=1,\dots, g-1$, 
\begin{equation}
\begin{split}
\sum_{j=g-i}^{g-1}b_j
& = \sum_{j=g-i}^{g-1} (d-jt)\beta \\
& = [i - \frac{(2g-i-1)it}{2}]\frac{\gamma}{d}\\
& = \frac{[2d-(2g-i-1)t]i}{2d}\gamma\\
& = p(i)\gamma
\end{split}
\end{equation}
and
\begin{equation}
\begin{split}
\frac{(k-it)b_{g-i}}{t}+\sum_{j=g-i}^{g-1}b_j
&=\frac{(k-it)(d-(g-i)t)}{t}\beta+p(i)\gamma\\
& = \frac{2k[d-(g-i)t]-i(i-1)t^2}{2td}\gamma\\
& = q(i)\gamma
\end{split}
\end{equation}
Thus we have the following storage-bandwidth tradeoff:
\begin{equation}
\alpha_* =
\begin{cases}
\frac{B}{k} & \gamma \in  [\frac{B}{q(i)},+\infty), \\
\frac{B-p(i)\gamma}{k-it} & \gamma \in [\frac{B}{q(i+1)}, \frac{B}{q(i)}).
\end{cases}
\end{equation}
where
\begin{equation}
p(i)=\frac{[2d-(2g-i-1)t]i}{2d},
\end{equation}
\begin{equation} \label{qi}
q(i)=\frac{2k[d-(g-i)t]-i(i-1)t^2}{2td}.
\end{equation}
\subsection{The Two Extreme Points} \label{3c}
If $t \mid k$, then $k=gt$ while if $t \nmid k$, let $k = (g-1)t+r$ where $0 < r < t$. Then,
\begin{equation}
q(i)
\begin{cases}
\frac{k(d-k+t)}{td} & t \mid k, i=1 \\
\frac{k(d'-k+r+t)}{td} & t \nmid k, i=2
\end{cases} 
\end{equation} 
\begin{figure*}[t] \centering
\includegraphics[width=6in]{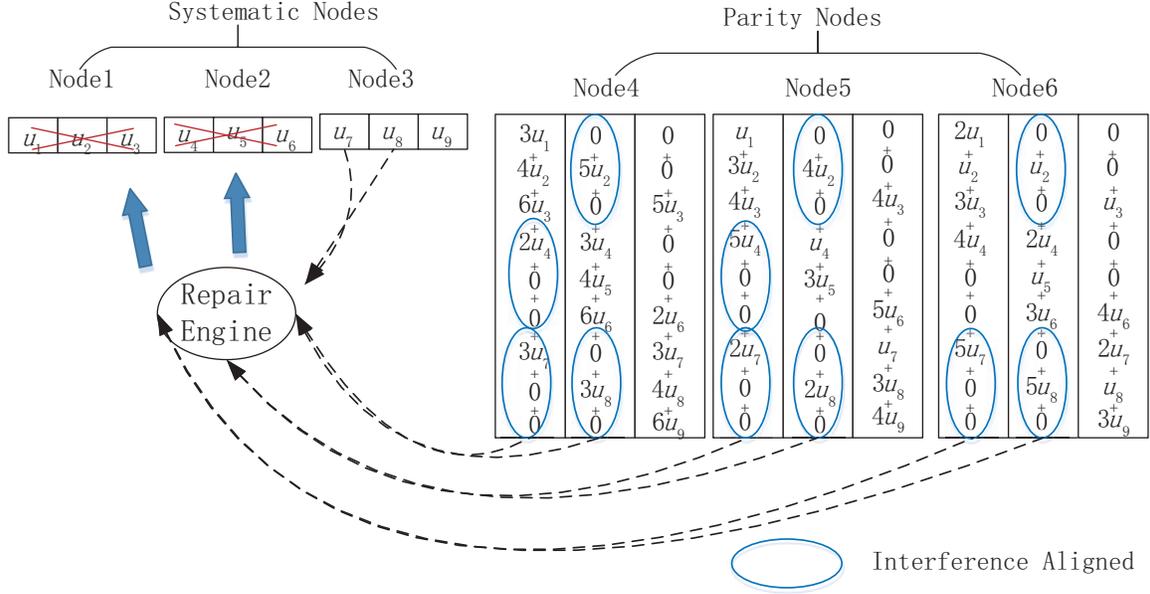}
\caption{Example of concurrent repair of [B=9,n=6,k=3,d=4,t=2] code over $\mathbb{F}_7$, the single repair example appears in literature \cite{6096412}.}
\label{fig_InterferenceAligned}
\end{figure*} 
We can obtain the minimal storage point:
\begin{equation} \label{minMS}
(\alpha_{MS}, \gamma_{MS})=(\frac{B}{k}, \frac{Btd}{k(d-k+t)})
\end{equation}
when i=g-1,
\begin{equation}
q(i+1)=q(g) = 
\begin{cases}
\frac{k(2d-k+t)}{2td} & t \mid k, \\
\frac{k(2d'-k+2r-t)}{2td} & t \nmid k
\end{cases}
\end{equation} 
We can obtain the minimal bandwidth point:
\begin{equation} \label{minMB}
(\alpha_{MB}, \gamma_{MB})=(\frac{2Bd}{k(2d-k+t)}, \frac{2Btd}{k(2d-k+t)})
\end{equation}
\subsection{Discussion and Simple Code Construction } \label{3d}
As can be seen from above expression (\ref{Di}) and (\ref{qi}), if $k=gt$ we have
\begin{equation}
q(i) = \frac{D_{i-1}}{td},
\end{equation}
implying that the bandwidth of concurrent repair mechanism have the same bound as that of the first phase of cooperative regenerating code, namely, $t\beta_1 = \beta$. 

 Although the total bandwidth of cooperative repair is minimized by both $\beta_1$ and $\beta_2$, it is impossible to maintain the tradeoff of cooperative RGC by further reducing $\beta_1$ and increasing $\beta_2$. Exactly, concurrent repair provides a shared scheme that can be used in non-fully distributed system.

The above mentioned constructions of MSR and MBR can be easily extended for this kind of multiple repair.

As an example of concurrent MSR depicted in figure (\ref{fig_InterferenceAligned}), we can extend interference alignment into multiple failure scenario, especially for exactly repair where $n,k, d+t \ge 2k-1$. Take [B=9,n=6,k=3,d=4,t=2] as an example, we can see the repair engine downloads $\frac{Bt}{k(d-k+t)}=2$ packets from $d=4$ nodes and recover the systematic node 1 and 2. That is, by eliminating $u_7$ and $u_8$ under the help of node 3, we can obtain the nonsingular coefficient matrix of the 6 variates,
\[A = \begin{bmatrix}
3 & 4 & 6 & 2 & 0 & 0 \\
0 & 5 & 0 & 3 & 4 & 6 \\
1 & 3 & 4 & 5 & 0 & 0 \\
0 & 4 & 0 & 1 & 3 & 4 \\
2 & 1 & 3 & 4 & 0 & 0 \\
0 & 1 & 0 & 2 & 1 & 3
\end{bmatrix}.\]
Then we can recover the 6 variates by using its inverse matrix. For more detail, please refer to \cite{6096412}. The example verifies the result of literature \cite{6620465} where $n=d+t=2k, k\ge 3, t=2$. The deterministic codes can be found in \cite{6847953}. 
By setting $d+t$ as a constant $n$, \cite{5978920} propose an adaptive regenerating codes. As $t$ changing, the number of helper $n-t$ is dynamically adjusted to adapt to the current state of the system. 

For minimal bandwidth point, \cite{5978920} figures out that fixed $t$ and $d$ are meaningful for constructing the related codes. Literature \cite{6566803} shows such explicit codes for $n \ge d+t, d \ge k, t \ge 1$. Comparing with single
repair model, the multiple mechanism requires smaller $\alpha$ result in lower storage cost at the minimal bandwidth point \cite{6566803}. For example, if $n=19,k=10,t=6,d=13$, then $\alpha=\frac{2dB}{k(2d-k+t)}=0.118B$, while if $n=19,k=10,t=1,d=18$ then $\alpha=\frac{2dB}{k(2d-k+1)}=0.133B$. It can save $12.7\%$ storage space. Similarly, the laze repair mechanism of Total Recall \cite{bhagwan2004total} can be introduced into minimal bandwidth repair as long as we set the threshold equal to proper $t$ \cite{5978920}. The repetitious details need not be elaborated here.
%
\subsection{Scalable-MDS Property} \label{3e}
Yuchong Hu \textit{et al.} \cite{5402494} have proposed the concept of (n,k) strong-MDS codes. In detail, a file is divided into $k(n-k)$ packets and encoded into $n(n-k)$ packets. Any $k(n-k)$ packets out of the $n(n-k)$ packets can reconstruct the original file. Namely, let $h_i$ denote the number of packets download from node $i \in \lbrace 1, \dots, n \rbrace$. For $0 \le h_i \le n-k$, we can reconstruct the original file if $\sum_{i=1}^{n}h_i=k(n-k)$.

Reconsider our generating model we can find that the whole bandwidth to generate $t=k$ nodes is just $B$ and the data are downloaded from $d(\ge k)$ nodes with each of $\beta= \frac{B}{d}$.
Since these $t=k$ generated nodes keep the MDS properties, implying that we can reconstruct the original file by getting data not only from $k$ nodes but a flexible $d$ nodes. In this way, we call it the scalable MDS properties, which verifies strong-MDS properties \cite{5402494}. 

\section{Expansion of regenerating mechanism} \label{4}
\begin{figure}[!t]
\includegraphics[width=3in]{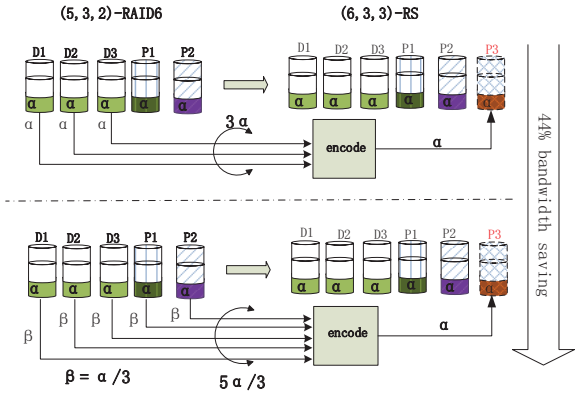}
\caption{Upgrade (5,3,2)-RAID6 to (6,3,3)-RS codes by using the idea of regenerating codes}
\label{fig_raidupgrade}
\end{figure}

\begin{figure}[!t]
	\includegraphics[width=3in]{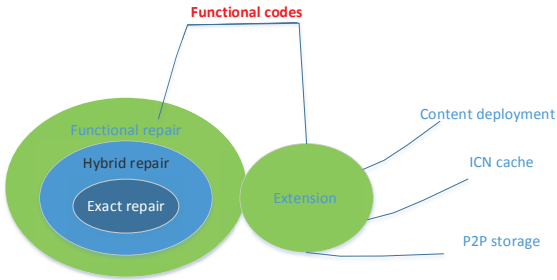}
	\caption{Expansion of function regenerating codes for scalable application}
	\label{fig_expansionfunc}
\end{figure}

In this section, a new application of regenerating codes is exploited.
Since the $(n,k)$-MDS property of erasure codes can provide ${n \choose k}$ choices for obtaining intact information in a distributed environment and dynamic $n$ can adjust the number of choices according to the popularity of certain information, a flexible system is taken into consideration in next generation networks which pay more attention to information itself, such as content delivery network (CDN) and information centric network (ICN).
Hence, regenerating codes can play another role in practical usage besides for archive or cold data storage. For instance depicted in figure \ref{fig_raidupgrade}, we add one parity node to transform RAID6 to RS by using the idea of regenerating codes, which can perform the same bandwidth saving feature as repair. This kind of upgrade adopts functional codes to guarantee the MDS property. In figure \ref{fig_expansionfunc}, we display the expansion. Based on the former analysis that the amount of data stored of minimal bandwidth point relates to $n$, we only consider the minimal storage point for practice.

\subsection{Upgrade (n,k) to (n+1,k)}

To add a new storage node to $(n,k)$ is to functionally repair for $(n+1,k)$ erasure codes, where the new node is treated as a virtual node of $(n+1,k)$ codes. 

As illustrated in figure \ref{fig_repairexpansion}, we represent the detail for the both repair and extension for $(n=5,k=3)$-MDS codes by using IA. Because the helper number $d$ is variable, $d_t=4$ for repair and $d_s=5$ for extension, we further divide each piece into $L$ segments by referring (\ref{MSR}), where $L = LCM(k(d_t-k+1),k(d_s-k+1))=LCM(6,9)=18$. Let $B=18$, then $\alpha=6$, $\beta_t=3$, $\beta_s=2$. Denote $\mathbf{a}=[a_1, \dots, a_{6}]^t$, $\mathbf{b}=[b_1, \dots, b_{6}]^t$, $\mathbf{c}=[c_1, \dots, c_{6}]^t$, where $[*]^t$ indicates a transpose. $\mathbf{A}_i$, $\mathbf{B}_i$ and $\mathbf{C}_i$ are the corresponding $\alpha \times \alpha$ generator submatrices (i=1,2,3). In the repair side, each helper provides $\beta_t=3$ linear combinations of the $\alpha = 6$ segments by multiplying $\alpha \times \beta_t$ repair project vectors $\mathbf{p}_j, j=1,2,3,4$, i.e., $\mathbf{a}^t\mathbf{p}_1$. In the extension side, each helper provides $\beta_s=2$ linear combinations of the $\alpha = 6$ segments by multiplying $\alpha \times \beta_d$ extension project vectors $\mathbf{q}_j, j=1,2,3,4$, i.e., $\mathbf{a}^t\mathbf{q}_1$.
\begin{figure}[h]
\includegraphics[width=3in]{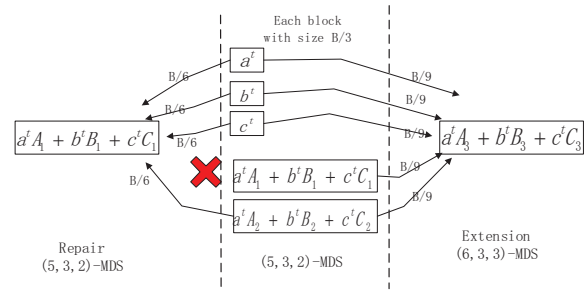}
\caption{The example for both repair and extension using regenerating code}
\label{fig_repairexpansion}
\end{figure}
\subsection{(n,k) to (n+s,k)}
%


So far we have demonstrated that we can download minimal data to construct one new node. Now we focus on whether is available to download more data from $d$ helpers to construct more nodes that have the same MDS property. We unify both repair and scalability in one model. Firstly, we prove the capability achievement of multiple node scenario. 
\subsubsection{the capability achievement of multiple node repair or upgrade }
Suppose the original file $\mathbf{M}$ can be denoted as a $B \times L$ matrix with each entry defined in finite field $\mathbb{F}$. That is, file $\mathbf{M}$ consists of $B$ strips, each with size $L$. For the seek of distributed storage, we divide $B$ into $k$ pieces, each with $\alpha$ strips, and then encode $k$ pieces into $n$ pieces, each with the same number of strips. In practice, we adapt $L$ to make strip as the minimal operation unit. Additionally, we denote $\mathcal{G}=\left\lbrace G_{1}, \dots, G_{n} \right\rbrace $ as the set of coding matrix in the same finite field, one element $G_{i}$ of which is a $B \times \alpha$ matrix and specified by $\alpha$ column vectors $\left\lbrace \mathbf{g}^{i}_{j}, j = 1, \dots, \alpha \right\rbrace$ with dimension $B$. It is used to obtain one coded piece by $\mathbf{M}^{T} G_{i}$.  
To keep MDS property, namely, to reconstruct the original file $\mathbf{M}$ from any $k$ out of $n$ nodes, say $x_{1}, \dots, x_{k}$, the span of the $k\alpha$ vectors in $\Phi = \left\lbrace G_{x_{1}}, \dots, G_{x_{k}} \right\rbrace$ should be full rank, which is 
\begin{equation} \label{MDSpro}
rank([ G_{x_{1}}, \dots, G_{x_{k}} ])=k*\alpha.
\end{equation}

Suppose a daemon connects to $d \ge k$  helpers, downloads $\beta$ linear combinations of $\alpha$ strips in each one and re-encode $\alpha$ strips for each of $r$ newcomers from the whole downloaded $d\beta$ strips. Let $\mathcal{G}^{new}=\left\lbrace G_{1}^{new}, \dots, G_{r}^{new} \right\rbrace$ as the set of new coding matrices, the whole process can be denoted as the following linear transformation, 
\begin{equation}
\begin{bmatrix}
G_{1}^{new} \\
\vdots \\
G_{r}^{new}
\end{bmatrix}
=[G_{x_{1}}P_{x_{1}}, \dots, G_{x_{d}}P_{x_{d}}]
\begin{bmatrix}
Z^{(1)}\\
\vdots \\
Z^{(r)}
\end{bmatrix}
\end{equation}
where $P_{x_{w}}$ is the project matrix of size $\alpha \times \beta$ on the $w$-th node out of $d$ nodes and $Z^{(\ell)}$ is the linear transform matrix of size $d\beta \times \alpha$ for the $\ell$-th newcomer. To guarantee MDS property is equivalent to make $k$ out of $n$ (if repair) or $n+s$ (if upgrade) encoding matrix satisfy (\ref{MDSpro}). Thus, a new span $\Phi^{new}$ containing $k\alpha$ linear independent vectors arbitrarily selected from $\mathcal{G}^{new}$ and $\mathcal{G}^{old}= \left\lbrace G_{1}, \dots, G_{T} \right\rbrace$ is constructed, where $T=n-t$ for repair or $T=n$ for upgrade. Then we have the following theorem:
\begin{theorem} \label{multipleMSR}
Given a scalable MDS code $(n, k, d)$, the minimal data downloaded from $d$ helpers to generating $r$ newcomers is  $\beta \ge \frac{rB}{d-k+r}$, whatever $r$ newcomers are replacement nodes for repair or added nodes for upgrade.
\end{theorem}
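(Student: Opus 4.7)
The plan is to invoke the min-cut bound on the information flow graph from Theorem \ref{generalMinCut}, specialized to a single regeneration stage with $t=r$ newcomers spawned by one virtual daemon VD. First I would fix the graph: the virtual source VS connects to the $T$ initial nodes with capacity $\alpha$ (where $T=n-r$ in the repair scenario and $T=n$ in the upgrade scenario); VD takes $\beta$ from each of $d$ chosen helpers among the initial nodes and emits $\alpha$ to each of the $r$ newcomers; a data collector DC then selects any $k$ nodes from the post-regeneration pool (the $n$ post-repair nodes or the $n+r$ post-upgrade nodes) and must reconstruct the whole file of size $B$.

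Next I would isolate the adversarial cut. Since all $r$ newcomers share the single bottleneck VD, the worst-case DC is one that includes as many newcomers as possible, namely all $r$ of them (using $r \le k$), together with $k-r$ old nodes selected to coincide with $k-r$ of VD's $d$ helpers (which is feasible because $d \ge k$ and $T \ge k-r$). Placing these $k-r$ chosen helpers on DC's side of the cut, the remaining $d-(k-r)=d-k+r$ helpers on VS's side, and VD itself on VS's side, the cut value is
\[
(k-r)\alpha \;+\; \min\bigl(r\alpha,\,(d-k+r)\beta\bigr).
\]
By Theorem \ref{generalMinCut}, this quantity must be at least $B$. Specialising to the MSR point $\alpha=B/k$, the first term equals $(k-r)B/k$ and the remaining constraint collapses to $(d-k+r)\beta \ge rB/k$, from which the asserted lower bound on $\beta$ follows; aggregating across the $d$ helpers yields the stated total-bandwidth expression.

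The unifying step is to observe that the construction above uses no property of the pool size $T$ beyond $T \ge k-r$, so it applies verbatim to both repair (where one replaces the $r$ failed entries of $\mathcal{G}^{old}$) and upgrade (where one appends $r$ fresh entries to $\mathcal{G}^{old}$). Hence the same lower bound governs both settings, and $r$ newcomers produced by a daemon satisfying this bound can preserve the MDS property in either case.

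The main obstacle I anticipate is justifying that the exhibited cut is genuinely the global minimum over all DC choices and all side assignments of VD. I would discharge this by a short case analysis: (i) any DC selection that includes fewer than $r$ newcomers makes the bottleneck through VD slack, so it can only \emph{raise} the cut; (ii) placing fewer than $k-r$ DC-chosen old nodes among VD's helpers only inflates the residual term $(d-k+j)\beta$; and (iii) the two ways of severing VD from DC (cutting the $r$ outgoing edges of capacity $\alpha$ versus the remaining $d-k+r$ incoming edges of capacity $\beta$) are already captured by the $\min$ in the bound, so the derivation automatically selects the binding constraint. Once this reduction to the canonical adversarial cut is in place, the conclusion follows from arithmetic.
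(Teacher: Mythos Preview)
Your argument is correct but follows a genuinely different route from the paper. The paper proves the bound \emph{algebraically}: it models each node by a $B\times\alpha$ coding matrix $G_i$, writes every newcomer's columns as linear combinations of the $d\beta$ projected vectors $G_{x_w}P_{x_w}$, and then, for a data collector choosing $k-q$ helpers together with $q$ newcomers, performs elementary column operations to eliminate the contributions of the $k-q$ helper blocks. What survives lies in the span of the $\beta(d-k+q)$ projected columns coming from the remaining helpers, so full rank forces $q\alpha\le\beta(d-k+q)$; maximising over $q\le r$ with $\alpha=B/k$ yields $\beta\ge rB/\bigl(k(d-k+r)\bigr)$. The upgrade case is then dispatched by observing that the non-helper old nodes are already linear combinations of the same basis and hence do not tighten the constraint.

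Your proof instead runs the information-theoretic cut-set argument on a single-stage flow graph and reads off the same inequality from the cut $(k-r)\alpha+\min\bigl(r\alpha,(d-k+r)\beta\bigr)\ge B$. Two remarks. First, you do not actually need Theorem~\ref{generalMinCut} or the minimality case analysis you anticipate: for a \emph{lower} bound on $\beta$ it suffices to exhibit \emph{any} valid cut and use that every cut must carry at least $B$; your three-case discussion would only be needed for tightness/achievability. Second, your closing sentence (``$r$ newcomers produced by a daemon satisfying this bound can preserve the MDS property'') is an achievability claim that the cut-set argument does not establish; drop it or flag it as separate. In exchange for this cleanliness, what the paper's rank argument buys is structural information about the coding matrices (the role of $P_{x_w}$ and $Z^{(\ell)}$) that feeds directly into the subsequent construction discussion, whereas your argument is agnostic to linearity and to whether the newcomers are repairs or upgrades---which is exactly why your unification step is a one-liner.
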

\begin{proof}
The problem exactly is to find the minimal $d\beta$ that makes the matrix $[\Phi^{new}]=[\mathcal{G}^{old}_{1,k-q}, \mathcal{G}^{new}_{1,q}]=[G_{y_1}, \dots, G_{y_{k-q}}, G_{x_1}^{new},\dots, G_{x_q}^{new}]$ be full rank, where $0 \le q \le r$.

In the beginning, we only take repair scenario into consideration. Without loss of generalization, we let $y_1=x_1,\dots,y_{k-q}=x_{k-q}$, then the vector format of $[\Phi^{new}]$ is $[\mathbf{g}^{x_1}_1, \dots, \mathbf{g}^{x_1}_\alpha; \dots;  \mathbf{g}^{x_{k-q}}_1, \dots, \mathbf{g}^{x_{k-q}}_\alpha; \mathbf{g}^{x^{new}_1}_1, \dots, \mathbf{g}^{x^{new}_1}_\alpha; \dots;$ $ \mathbf{g}^{x^{new}_q}_1, \dots, \mathbf{g}^{x^{new}_q}_\alpha]$. 
We denote $p_{i,j}^{x_w}$ as the $i$-th row and $j$-th column element of $P_{x_w}$ and $z^{\ell}_{i,j}$ as the $i$-th row and $j$-th column element of $Z^{(\ell)}$, then, 
\begin{equation} \label{onenodeprovide}
G_{x_w}P_{x_w}=
\begin{bmatrix}
\sum_{i=1}^{\alpha} p_{i1}^{x_w}\mathbf{g}^{x_w}_i & \ldots & \sum_{i=1}^{\alpha} p_{i\beta}^{x_w}\mathbf{g}^{x_w}_i
\end{bmatrix}.
\end{equation}
and the $h$-th vector of $G^{new}_{\ell}$ is
\begin{equation} \label{vector}
\begin{split}
\mathbf{g}^{x^{new}_{\ell}}_h 
&=\sum_{w=1}^{d} \sum_{j=1}^{\beta} \sum_{i=1}^{\alpha} z^{\ell}_{j+(w-1)\beta,h}  p_{i,j}^{x_w} \mathbf{g}^{x_w}_i\\
&= \sum_{w=1}^{d} \sum_{i=1}^{\alpha} f_{i,w,h}^\ell \mathbf{g}^{x_w}_i
\end{split}
\end{equation}
where $f_{i,w,h}^\ell = \sum_{j=1}^{\beta} z^{\ell}_{j+(w-1)\beta,h}  p_{i,j}^{x_w}$, $1 \le w \le d$, $1 \le \ell \le m$ and $h \in \left\lbrace 1,\dots,\alpha \right\rbrace$. 
By using elementary column operations on the matrix $[\mathcal{G}^{old}_{1,k-q}, \mathcal{G}^{new}_{1,q}]$, we initially update $\mathbf{g}^{x^{new}_{\ell}}_h$ by 
\begin{equation} \label{newvector}
\mathbf{\bar{g}}^{x^{new}_{\ell}}_h= \sum_{w=k-q+1}^{k} \sum_{i=1}^{\alpha} f_{i,w,h}^{'\ell} \mathbf{g}^{x_w}_i.
\end{equation}
Due to the symmetrical effect of MDS property, we treat the vectors in $\Phi$ as the base vectors and other vectors of $G \notin \Phi$ are the linear combinations of the base vectors. 
Then each $\mathbf{g^{x_w}_i}$ in $[G_{x_{k+1}}, \dots, G_{x_{d}}]$ can be denoted as the linear combination of $k\alpha$ base vectors, $\mathbf{\bar{g}}^{x^{new}_{\ell}}_h$ can also be treated as the new linear combination of the $r\alpha$ different base vector from $[G_{x_{k-q+1}}, \dots, G_{x_{k}}]$. If we guarantee each element in $[\mathbf{g}^{x^{new}_1}_1, \dots, \mathbf{g}^{x^{new}_1}_\alpha; \dots; \mathbf{g}^{x^{new}_q}_1, \dots, \mathbf{g}^{x^{new}_q}_\alpha]$ contains one non-eliminated base vector by using elementary column operations, the matrix would be full rank. In such way, each of the left $\beta(d-k+q)$ strips derived from $[G_{x_{k-q+1}}, \dots, G_{x_{d}}]$, if it is useful, at least contributes one mutually different base vector, which implies $q\alpha \le \beta(d-k+q)$. 

For $\alpha=B/k$, we have $\beta \ge \frac{qB}{k(d-k+q)}$. 
To make it suitable to all the $ 1 \le q \le r$, we have 
\begin{equation} \label{minimalstore}
\beta \ge \frac{rB}{k(d-k+r)}.
\end{equation}
Then, we prove it for upgrade scenario. We need to prove that the coding matrix of the non-helpers keep the same MDS property with the coding matrix of the newcomers. Based on the same assumption that the code vector of each matrix of the non-helpers is the linear combination of the base vectors, it  obviously satisfies the full rank requirement if we adopt the checking step.
Hence the claim follows.
\end{proof}
It is consistent with the result of the concurrent minimal storage point. 
Moreover, we have to determine the value of coefficients of each base vector result from different $P_{x_w}$ and $Z^{(\ell)}$ to ensure such linear independence. Because the above description is based on the assumption that $\beta$ and $\alpha$ is positive integer, we should select proper $\beta$ for application. In cooperative minimal storage regenerating process each newcomer at least need one strip from each helper, which implys the minimal value of $\beta$ is $r$ ($\beta =r \beta_1$). Then we have $\alpha = d-k+r$ and $B=k(d-k+r)$ to construct such codes. In fact, it is the same way as the scalar codes proposed in \cite{6283044}. For vector codes (e.g., $\beta = 2r$), the strip may not be the minimal operational unit and the size of the matrix $P_{x_w}$ and $Z^{(\ell)}$ will be expanded.

Secondly, we show the scalable repair and upgrade for multiple nodes.

\subsubsection{scalable repair and upgrade for multiple nodes}
Review the adaptive codes for repairing when $d+t=n$. Then a file is divided into $M=k(n-k)$ packets and encoded into $n(n-k)$ packets $x_{1,1},\dots, x_{1,n-k}, \dots, x_{n,1}, \dots, x_{n,n-k}$ by multiplying $k(n-k) \times n(n-k)$ matrix. Each one of the $n$ nodes stores $\alpha=n-k$ packets. When there are $t$ node failures, we download $t$ linear combinations of $n-k$ packets from each of $d$ helpers.

Similarly, we can apply the codes to extension by set $d+s=n$, treat $s$ newcomers as $s$ failures waiting for repair. In this way, there would be $n-s$ healthy nodes not to the helpers, while we need to guarantee them and the newcomers maintain the MDS properties.

\begin{theorem}
Given a distributed storage system with (n,k)-MDS codes based on MSR scheme, where $B=k(d-k+1)$,$\alpha=B/k=d-k+1$. If $d$ is fixed and there exist $d$ available helpers, we at least need to download 
\begin{equation}
\gamma^{(r)}=\frac{Br(d-r+1)}{k(d-k+1)}=r(d-r+1)
\end{equation}
data from all the $d$ helpers to generate $r$ nodes simultaneous.
We use a $d$-dimension vector $\mathbf{h}$ to indicate the downloaded capacities from the helpers and $h_{i}$ denotes the downloaded capacities from $i$-th helper, which subjects to $\sum_{i=1}^{d}h_{i} = r(d-r+1)$, $1 \le h_{i} \le r$.
\end{theorem}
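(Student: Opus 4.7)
The plan is to promote Theorem \ref{multipleMSR} from the uniform-download setting (a single $\beta$ per helper) to the vector case with per-helper capacities $h_1,\dots,h_d$, and then minimize $\sum h_i$ over the resulting feasibility region. First I would redo the linear-algebra bookkeeping of Theorem \ref{multipleMSR} with each projection matrix $P_{x_w}$ of size $\alpha\times h_w$ rather than $\alpha\times\beta$. For any $k$-subset of the $n+s$ (upgrade) or $n$ (repair) active nodes consisting of $q$ newcomers and $k-q$ old nodes, the elementary column operations used to derive \eqref{newvector} reduce the $q\alpha$ new column vectors to linear combinations of the data downloaded from the $d-(k-q)$ helpers that are not among the $k-q$ selected old nodes. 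The MDS rank condition then forces
\[
\sum_{i\in\mathrm{rem}(q)} h_i\ \ge\ q\alpha\ =\ q(d-k+1).
\]

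Second, I would let the adversary worst-case the left-hand side by picking the $k-q$ selected old nodes to coincide with the $k-q$ helpers of \emph{largest} $h_i$. This collapses the family of constraints into: for every $q\in\{1,\dots,r\}$, the sum of the $d-k+q$ smallest $h_i$ is at least $q(d-k+1)$. Combined with $1\le h_i\le r$, this is an integer linear program for $\min\sum h_i$. A direct exchange argument shows that the binding constraint is $q=r$, and the optimum is attained by the profile in which $r$ helpers contribute $h_i=1$ and the remaining $d-r$ helpers contribute $h_i=r$, giving $\sum h_i = r + (d-r)r = r(d-r+1)$, which matches the closed form $\gamma^{(r)}=\frac{Br(d-r+1)}{k(d-k+1)}$ after substituting $B=k(d-k+1)$.

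For achievability I would exhibit functional codes with projection matrices of the prescribed sizes $\alpha\times h_w$ and transforms $Z^{(\ell)}$ chosen generically over a sufficiently large finite field, invoking the standard existence result for functional regenerating codes so that the rank requirements hold simultaneously across all $k$-subsets. I expect the LP/exchange step to be the main obstacle: verifying that no uneven redistribution of the $h_i$ beats $r(d-r+1)$ once the cap $h_i\le r$ is taken seriously requires a careful majorization argument, and one must also check that the same bound carries cleanly from the repair model ($T=n-t$) to the upgrade model ($T=n$), where every original node, not just the non-failed ones, enters the MDS check.
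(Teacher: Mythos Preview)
Your LP approach diverges from the paper's and has a real gap in the optimality step. The paper does \emph{not} set up or solve an integer program; it simply reparametrizes the fixed sub-packetization $\alpha=d-k+1$ as $\alpha=d'-k+r$ with $d'=d-r+1$, invokes the already-proven concurrent MSR point \eqref{minMS} to obtain $\gamma=\frac{Brd'}{k(d'-k+r)}=r(d-r+1)$ in the uniform case, and then argues that any admissible $\mathbf{h}$ with $\sum_i h_i=r(d-r+1)$ and $1\le h_i\le r$ is reachable from the uniform case by swapping individual linear combinations between helpers (``replace a combination of one helper with that of the other helper''). So the paper's argument is a reduction-plus-substitution, not an optimization.

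Your exchange argument for the ILP is where things break. The rank constraints you derive --- for every $q\le r$, the sum of the $d-k+q$ smallest $h_i$ is at least $q(d-k+1)$ --- do \emph{not} force $\sum_i h_i\ge r(d-r+1)$. Already over the reals the uniform profile $h_i=\tfrac{r(d-k+1)}{d-k+r}$ satisfies every constraint with equality at $q=r$ and has total $\tfrac{dr(d-k+1)}{d-k+r}$, which is strictly smaller than $r(d-r+1)$ whenever $1<r<k$ (the difference is $(r-1)(k-r)$ after simplification). Integrality plus the cap $h_i\le r$ do not rescue the claim either: for $(d,k,r)=(7,4,3)$ the all-twos profile $h_i\equiv 2$ is integral, respects $1\le h_i\le r$, meets all three rank inequalities, and has total $14<r(d-r+1)=15$; your claimed optimizer $(1^r,r^{d-r})$ even becomes infeasible once $r>d-k+1$. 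Hence ``the binding constraint is $q=r$ and the optimum is $r(d-r+1)$'' is false as stated, and the ILP route cannot deliver the asserted lower bound. What the theorem is really establishing --- and what the paper's proof shows --- is the \emph{achievability} side for the specific fixed sub-packetization: with $d'=d-r+1$ helpers the concurrent MSR point applies verbatim, and the flexibility in $\mathbf{h}$ comes from the substitution argument rather than from LP duality.
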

\begin{proof}
It is easy to prove the special case when $h_{1}= h{2} = \dots = r$, in which we let $d-k+1=d'-k+r$, where $d'$ is the number of helpers depicted in equation (\ref{minMS}) and $r$ is treated as the number of newcomers, then we have
\begin{equation}
\frac{Brd'}{k(d'-k+r)}=\frac{Br(d-r+1)}{k(d-k+1)}=r(d-r+1)
\end{equation}
For general case, we only prove that the combinations are identical to that of the special case. As can be seen from the proof of theorem \ref{multipleMSR}, all the combinations are based on the same base vectors and the matrix can be full as long as there are enough linear independent combinations. Then we can replace a combination of one helper with that of the other helper. It means that we can transfer any general case to special case, then the claim follows.
%
\end{proof}




\begin{example}
Suppose n=7, k=4, d= 5, r = 2, s=1, the file size is $B=(d-k+r)k=12$ packets. To repair two node, we need to download $\frac{Br}{k(d-k+r)}=2$ from each of the 5 helpers, totally 10 packets. To upgrade from (7,4) to (8,4), we can set $d_s=6$ and then need $\frac{d_sB}{k(d_s-k+1)}=6$ packets from 6 helpers. To upgrade to (9,4), we can need $\frac{2(d_s-2+1)B}{k(d_s-k+1)}=10$ packets from 6 helpers, where $\mathbf{h}=\left[ 2,2,2,2,1,1 \right]$. If we set $d_s=7$, $\mathbf{h}=\left[ 2,2,2,1,1,1,1 \right]$
\end{example}

Thus, we give the basic steps of the whole procedure for scalable storage. We set $d=d_{r}=n-t$

(1) File distribution
\begin{enumerate}
\item For proper $(n,k,d,t)$, the original file is divided into $k(d-k+t)$ packets and then encoded into $n(d-k+t)$ packets $\left\langle x_{1,1}, \dots, x_{1,d-k+t}; \dots; x_{n,1}, \dots, x_{n,d-k+t}\right\rangle $ with $x_{i,j}=\mathbf{M}^T\mathbf{g}_i^j$, $1 \le i \le n$, $1 \le j \le d+t-k$. Each node $X_i$ stores $d-k+t$ packets $\left\langle  x_{i,1}, \dots, x_{i,d-k+t} \right\rangle $.
\end{enumerate}

(2) Date repairing
\begin{enumerate}
\item Choose a set of $t$ nodes $Y_1,\dots,Y_t$ from idle nodes as replacement nodes and a set of $d$ nodes $X'_1,\dots,X'_d$ from the surviving nodes as helpers. 
\item Each helper $X'_i$ transmits $t$ encoded packets $\left\langle  \chi_{i,1} \dots \chi_{i,t}  \right\rangle $ = $\left[ x_{i,1}, \dots, x_{i,d-k+t} \right]*P_{i}$ to the daemon, where $P_{i}$ is a $(d-k+t)\times t $ coefficient matrix.
\item The daemon encode the accepted packet $\left\langle  \chi_{1,1} \dots \chi_{1,t};\dots; \chi_{d,1} \dots \chi_{d,t} \right\rangle $ into linear independent packets $\left\langle  y_{1,1}, \dots, y_{1,d-k+t}; \dots; y_{t,1}, \dots, y_{t,d-k+t} \right\rangle $ by separately multiplying different linear transform  matrices $Z_1,\dots, Z_t$ of size $dt \times (d-k+t)$ .
\item The daemon distributes the encoded packet $\left\langle  y_{i,1}, \dots, y_{i,d-k+t} \right\rangle $ to node $Y_i$ for $1 \le i \le t$.
\end{enumerate}

(3) Storage upgrade from (n,k) to (n+s,k)
\begin{enumerate}
\item Choose a set of $s$ nodes $S_1,\dots,S_s$ from idle nodes as upgrade nodes and a set of $d_{s}$ nodes $X'_1,\dots,X'_{d_{s}}$ from the surviving nodes as helpers.
\item Each helper $X'_i$ transmits $h_i$ encoded packets $\left\langle  \varsigma_{i,1} \dots \varsigma_{i,h_i}  \right\rangle $ = $\left[ x_{i,1}, \dots, x_{i,d-k+t} \right]*R_{i}$ to the daemon, where $R_{i}$ is a $(d-k+t) \times h_i$ coefficient matrix and $h_i$ is subject to $\sum_{i=1}^{d_s} h_i = s(d-s+1)$ and $\lfloor \frac{s(d-s+1)}{d_{s}} \rfloor \le h_i \le s$. 
\item The deamon encodes the accepted pakects $\left\langle  \varsigma_{1,1} \dots \varsigma_{1,h_1}; \dots; \varsigma_{d_{s},1} \dots \varsigma_{d_{s},h_{d_s}}  \right\rangle $ into linear independent packets $\left\langle  s_{1,1}, \dots, s_{1,d-k+t}; \dots; s_{t,1}, \dots, s_{t,d-k+t} \right\rangle $ by separately multiplying different linear transform matrix $Z_1,\dots, Z_s$ of size $s(d-s+1) \times (d-k+t)$.
\item The daemon distributes $\left\langle  s_{i,1}, \dots, s_{i,d-k+t} \right\rangle $ to $S_i$ for $1 \le i \le s$.
\end{enumerate}
We have the following properties:
\begin{enumerate}
\item[1] \textbf{Strong-MDS}: The data collector can reconstruct the original data by downloading minimal $\sum_{i=1}^{k^*}v_{i} = B$ data from any $k^*$ out of $n$ nodes, where $k \le k^* \le n$ and $1 \le v_{i} \le k$.
\item[2] \textbf{Multiple repair}: It can concurrently repair $t$ failures by downloading minimal $\frac{Btd_{r}}{k(d_{r}-k+t)}$ from $d_{r}$ helpers with $\frac{Bt}{k(d_{r}-k+t)}$ for each, where $k \le d_{r} \le n-t$ and $1 \le t \le k$.
\item[3] \textbf{Scalable upgrade}: It can upgrade $(n,k)$ codes to $(n+s,k)$ codes by treating added nodes as failure ones so that implement the multiple repair procedure or downloading minimal $\sum_{i=1}^{d_{s}}h_{i}=\frac{Bs(d_{s}-s+1)}{k(d_{s}-k+1)}$ from $d_{s}$ helpers, where $h_{i}$ denotes the contribution of $i$-th helper, $0 < h_i \le s$, $k \le d_{s} \le n$ and $1 \le s \le k$.
\end{enumerate}

Observe the later two properties, we can fix the number of strips in each piece, namely $d_{r}+t=d_{s}+1$ are preferred in practice. Thus, the DSS can be more flexible to adjust the amount of storage result from the popularity of data.

\section{Conclusion} \label{5}
In this paper, we review the concurrent regenerating codes for multiple-node repair and give close-form expressions for the storage-bandwidth bound by using cut-set bound analysis. Referring the cooperative regenerating codes, the codes can simplify the design process of storage systems, which is practical in non-purely distributed storage environment. We show that the existing constructions of single-repair MSR with IA can be easily reformed to concurrent MSR. When the number of multiple nodes needed to concurrently repair is $k$, we build a connection between strong-MDS property with the codes. Besides repair, we expand the application of functional codes for extension to provide scalability for erasure coded distributed storage systems. Apart from simply reforming repairable codes
to scalable codes, we propose both coupled and decoupled design for the extension, which can be flexibly used in the distributed network. 

\bibliographystyle{IEEEtran}
\bibliography{IEEEabrv,myinfocomref}

\end{document}